\documentclass[conference]{IEEEtran}
\IEEEoverridecommandlockouts
\usepackage[normalem]{ulem}
\newcommand{\stkout}[1]{\ifmmode\text{\sout{\ensuremath{#1}}}\else\sout{#1}\fi}
\usepackage{amsmath,amssymb,amsthm,amsfonts}
\usepackage{algorithmic}
\usepackage{graphicx}
\usepackage{textcomp}
\usepackage[table,RGB]{xcolor}
\usepackage[numbers,sort&compress]{natbib}
\usepackage[ruled,vlined,linesnumbered]{algorithm2e}
\usepackage{wrapfig}
\usepackage{multirow}
\usepackage{array}
\usepackage[scriptsize]{subfigure}
\usepackage{flushend}


\newtheorem{Theorem}{Theorem}

\theoremstyle{definition}
\newtheorem{mydef}{Definition}
\let\oldnl\nl
\newcommand{\nonl}{\renewcommand{\nl}{\let\nl\oldnl}}
\newcommand{\Break}{\textbf{break}}
\SetKwProg{Fn}{Function}{}{}
\SetKw{Continue}{continue}


\def\BibTeX{{\rm B\kern-.05em{\sc i\kern-.025em b}\kern-.08em
    T\kern-.1667em\lower.7ex\hbox{E}\kern-.125emX}}
\begin{document}

\title{Adaptive Encoding Strategies for \\Erasing-Based Lossless Floating-Point Compression}

\author{
	\IEEEauthorblockN{Ruiyuan Li~$^{1}$, Zheng Li~$^{1}$, Yi Wu~$^{1}$, Chao Chen~$^{1}$, Tong Liu~$^{2}$, Yu Zheng~$^{3, 4}$}\thanks{Chao Chen is the corresponding author.}
	\IEEEauthorblockA{
		$^{1}$~\textit{College of Computer Science, Chongqing University, China}\\
		$^{2}$~\textit{School of Computer Engineering and Science, Shanghai University, China}\\
		$^{3}$~\textit{JD iCity, JD Technology, China} \quad $^{4}$~\textit{JD Intelligent Cities Research, China}\\
		\{ruiyuan.li, zhengli, wu\_yi, cschaochen\}@cqu.edu.cn, tong\_liu@shu.edu.cn, msyuzheng@outlook.com
	}
}
\maketitle

\begin{abstract}
Lossless floating-point time series compression is crucial for a wide range of critical scenarios. Nevertheless, it is a big challenge to compress time series losslessly due to the complex underlying layouts of floating-point values. The state-of-the-art erasing-based compression algorithm {\em Elf} demonstrates a rather impressive performance. We give an in-depth exploration of the encoding strategies of {\em Elf}, and find that there is still much room for improvement. In this paper, we propose {\em Elf}$^*$, which employs a set of optimizations for leading zeros, center bits and sharing condition. 
Specifically, we develop a dynamic programming algorithm with a set of pruning strategies to compute the adaptive approximation rules efficiently. We theoretically prove that the adaptive approximation rules are globally optimal. We further extend {\em Elf}$^*$ to Streaming {\em Elf}$^*$, i.e., {\em SElf}$^*$, which achieves almost the same compression ratio as {\em Elf}$^*$, while enjoying even higher efficiency in streaming scenarios. We compare {\em Elf}$^*$ and {\em SElf}$^*$ with 8 competitors using 22 datasets. The results demonstrate that {\em SElf}$^*$ achieves 9.2\% relative compression ratio improvement over the best streaming competitor while maintaining similar efficiency, and that {\em Elf}$^*$ ranks among the most competitive batch compressors.  All source codes are publicly released.
\end{abstract}

\section{Introduction}\label{sec:introduction}


The proliferation of sensing devices~\cite{zhang2023wifi} and IoTs (Internet of Things)~\cite{li2015internet, nguyen20216g} has triggered off an unprecedented sheer volume of floating-point time series data (abbr. time series or time series data). These time series data are usually generated and transferred in a streaming fashion. For example, a single flight of a Boeing 787 aircraft can generate up to 0.5 terabytes of time series data~\cite{jensen2017time}. If these data are transmitted and stored in their raw formats, it would not only waste a lot of bandwidth and storage cost, but also slow down the transmission and storage efficiency. To cope with this issue, it is universally acknowledged to compress these data before transmitting or storing them~\cite{chiarot2023time}. Compared with lossy compression~\cite{lazaridis2003capturing, liang2022sz3, lindstrom2014fixed, liu2021decomposed, liu2021high, zhao2021optimizing, zhao2022mdz, liu2021exploring, liu2022dynamic, barbarioli2023hierarchical, li2023lossy, gong2022region, jiao2022toward} that may lose some important information, lossless time series compression~\cite{pelkonen2015gorilla, liakos2022chimp, li2023elf, Elf+, blalock2018sprintz, burtscher2007high, burtscher2008fpc, jensen2018modelardb, jensen2021scalable, ratanaworabhan2006fast, yu2020two, vestergaard2020randomly, gomez2021lossless} plays a fundamental role in a wide range of critical scenarios, such as scientific calculation~\cite{laporta2000high}, network transmission and data management~\cite{li2021trajmesa, li2020trajmesa, li2020just, ShardingSphere, he2022trass, xiao2022time, Yu2021distributed, bao2016managing, li2017cloud, john2023application, wang2022frequency, zhang2023compressstreamdb, mao2023morphstream, adams2020monarch}. For example, any error in a flight could lead to a disastrous consequence. Besides, it is unacceptable for users if their data are tampered with during transmission or storage.

However, it is a big challenge to losslessly compress streaming time series data, because the underlying structure of floating-point values in time series is rather complex. As shown in Fig.~\ref{fig:ElfXOR}(a), in accordance with IEEE 754 Standard~\cite{kahan1996lecture}, a double-precision floating-point value occupies 64 bits, in which 1 bit is for the \textbf{Sign}, 11 bits for the \textbf{Exponent}, and 52 bits for the \textbf{Mantissa}. To achieve streaming lossless compression of time series, one representative method is based on the XORing operation. As shown in Fig.~\ref{fig:ElfXOR}(a), given a double-precision floating-point value $v_t = 3.17$ and its predecessor $v_{t-1} = 3.25$, we perform an XORing operation on them, i.e., $xor_t = v_t \oplus v_{t-1}$. Since two consecutive values in a time series do not vary much, the XORed result $xor_t$ is expected to contain many \textbf{leading zeros}, and may also contain many \textbf{trailing zeros}. We can record the numbers of leading zeros and trailing zeros with very few bits, and store the \textbf{center bits} as they are, thus achieving lossless compression. When decompressing $v_t$, we first restore $xor_t$, and then perform another XORing operation, i.e., $v_t = xor_t \oplus v_{t-1}$ (note that $v_{t-1}$ is already decompressed in streaming scenarios).

\begin{figure}[t] 
  \centering
  \includegraphics[width=3.6in]{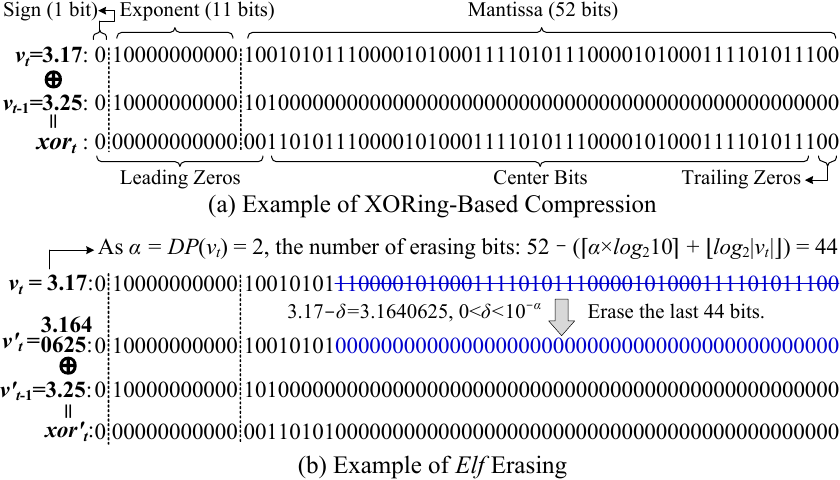}%
\vspace{-5pt}
  \caption{Main Idea of {\em Elf} Algorithm.}
  \label{fig:ElfXOR}
  \vspace{-15pt}
\end{figure}

Chimp~\cite{liakos2022chimp} finds that in most cases the XORed result actually contains very few trailing zeros, e.g., only 2 in Fig.~\ref{fig:ElfXOR}(a). 
To this end, the state-of-the-art work {\em Elf}~\cite{li2023elf} erases the last few bits (i.e., setting them as zeros) of each value to increase the number of trailing zeros, thus achieving remarkable performance. As shown in Fig.~\ref{fig:ElfXOR}(b), {\em Elf} quickly finds a small value $\delta$ satisfying $0 < \delta < 10^{-\alpha}=0.01$ to erase the last 44 bits of $v_t$, i.e., $v'_t = v_t - \delta$, where $\alpha$ is the decimal place count of $v_t$ (Definition~\ref{def:dpc}). By XORing $v'_t$ with $v'_{t-1}$, we can obtain $xor'_t$ with as many as 44 trailing zeros. During decompression, since $\alpha = 2$, $v_t = v'_t + \delta = $ $xor'_t \oplus v'_{t-1} + \delta$ and $0 < \delta < 10^{-\alpha}$, we can safely infer $v_t = 3.16\stkout{40625} + 0.01 = 3.17$ (see Section~\ref{sec:pre} for more details).



To encode the XORed result $xor'_t$ containing both many leading zeros and trailing zeros, {\em Elf} proposes novel encoding strategies shown in Fig.~\ref{fig:ElfEncodingStrategy}. There are four cases. \textbf{Case~\uppercase\expandafter{\romannumeral1}}: If $xor'_t = 0$, {\em Elf} simply writes two bits of flag `01'. \textbf{Case~\uppercase\expandafter{\romannumeral2}}: If $xor'_t \neq 0$ and not $C$ and $center_t \leq 16$ ($C$ is ``$lead_t = lead_{t - 1}$ and $trail_t \geq trail_{t-1}$''; $lead_t$, $trail_t$ and $center_t$ are the numbers of leading zeros, trailing zeros and center bits of $xor'_t$, respectively), after writing two bits of flag `10', {\em Elf} writes three bits of a number to \underline{approximately} represent $lead_t$, four bits of $center_t$ and finally $center_t$ bits of center bits. \textbf{Case~\uppercase\expandafter{\romannumeral3}}: It is similar to Case~\uppercase\expandafter{\romannumeral2} but {\em Elf} writes six bits of $center_t$ since $center_t > 16$. \textbf{Case~\uppercase\expandafter{\romannumeral4}}: If $xor'_t \neq 0$ and the condition $C$ holds, {\em Elf} writes two bits of flag `00' and then writes the center bits directly, since we can approximate $lead_t$ and $trail_t$ by $lead_{t - 1}$ and $trail_{t-1}$, respectively.

\begin{figure}[t]
  \centering
  \includegraphics[width=3.5in]{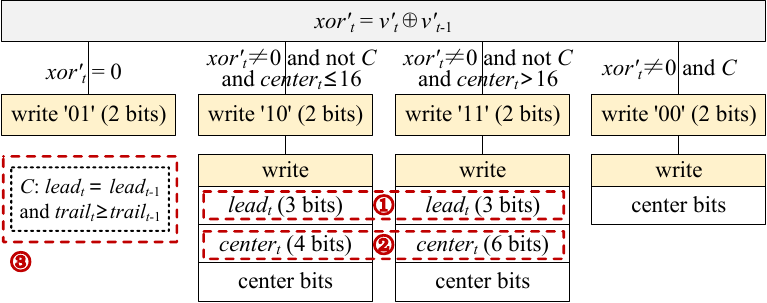}%
\vspace{-5pt}
  \caption{Encoding Strategies of {\em Elf}.}
  \label{fig:ElfEncodingStrategy}
  \vspace{-15pt}
\end{figure}

However, the encoding strategies of {\em Elf} still leave much room for improvement. First, {\em Elf} approximates $lead_t$ using 3 bits without considering the different leading zero distributions of different datasets (detailed in Section~\ref{sec:leading}). Second, including one extra flag bit for distinguishing between Case~\uppercase\expandafter{\romannumeral2} and Case~\uppercase\expandafter{\romannumeral3} in Fig.~\ref{fig:ElfEncodingStrategy}, {\em Elf} in fact records $center_t$ with 5 or 7 bits, which is too costly (detailed in Section~\ref{sec:centerbits}). Third, {\em Elf} employs condition $C$ to determine whether to share the information of leading zeros and trailing zeros with $xor'_{t-1}$, which may be suboptimal, because the saved bits may not offset the approximation cost (detailed in Section~\ref{sec:sharing}). 

To this end, this paper proposes {\em Elf}$^*$, a lossless compression algorithm for floating-point time series. Specifically, {\em Elf}$^*$ records $lead_t$ with an approximation rule (Definition~\ref{def:apprule}) that is adaptive to the distribution of leading zeros in a time series. Moreover, instead of encoding $center_t$, {\em Elf}$^*$ encodes $trail_t$ with another adaptive approximation rule based on the same technique proposed for $lead_t$, since with $lead_t$ and $trail_t$, we can calculate $center_t$ easily. Furthermore, {\em Elf}$^*$ employs an adaptive condition to determine whether or not to share the information with the previous $xor'_{t-1}$. Overall, the contributions of this paper are summarized as follows:

(1)~We give a further exploration on the encoding strategies of the state-of-the-art lossless floating-point compression {\em Elf}, and propose a set of optimizations for leading zeros, center bits and sharing condition. Our optimizations can enhance the compression ratio tremendously.

(2)~We develop a dynamic programming algorithm with a set of pruning strategies to compute the adaptive approximation rules efficiently. The approximation rules can be guaranteed to be globally optimal.

(3)~We extend {\em Elf}$^*$ to {\em SElf}$^*$ for streaming scenarios. {\em SElf}$^*$ can achieve almost the same compression ratio as {\em Elf}$^*$, while enjoying even higher efficiency.

(4)~We compare {\em Elf}$^*$ and {\em SElf}$^*$ with 8 state-of-the-art compression algorithms (including 5 streaming floating-point compression algorithms and 3 batch general compression algorithms) using 22 datasets. Among the streaming algorithms, {\em SElf}$^*$ achieves an average compression ratio improvement of 9.1\% over the best competitor {\em Elf}+, while maintaining a similar efficiency. Among the batch algorithms, when the block is small,  {\em Elf}$^*$ outperforms the best competitor Xz by 10.1\% in terms of compression ratio, but takes only 4.8\% compression time. When the block is big, {\em Elf}$^*$ still ranks among the most competitive compressors.

In the remainder of this paper, we provide some preliminaries in Section~\ref{sec:pre}. We describe the optimizations for leading zeros, center bits and sharing condition in Section~\ref{sec:leading}, Section~\ref{sec:centerbits} and Section~\ref{sec:sharing}, respectively. In Section~\ref{sec:stream}, we extend {\em Elf}$^*$ to {\em SElf}$^*$ for streaming scenarios. The experimental results are presented in Section~\ref{sec:exp}, followed by the related works in Section~\ref{sec:related}. We conclude this paper in Section~\ref{sec:conclude}.

\setlength{\tabcolsep}{0.1em} 
\begin{table}[t]
\caption{Symbols and Their Meanings}
\centering\label{tbl:symbols}
\begin{tabular}{|c|l|} 
\hline
\textbf{Symbols}			&\textbf{Meanings}	\\
\hline
\hline
\multirow{2}{*}{$TS = \langle v_{t_0}, v_{t_1}, ...v_{t_{n-1}} \rangle$} & Floating-point time series, where $v_{t_i}$ is a\\
& floating-point value generated at timestamp $t_i$\\
\hline
$v_t$, $v_{t-1}$	& Aliases of $v_{t_i}$ and $v_{t_{i-1}}$, respectively	\\
\hline
$\alpha = DP(v), \beta = DS(v)$	&	Decimal place count, decimal significand count\\
\hline
$XOR' = \langle xor'_{t_0}, xor'_{t_1},$ &Erased floating-point time series, where $xor'_0$ \\ 
$... xor'_{t_{n-1}}\rangle$				& $= v'_{t_0}$ and $xor'_i = v'_{t_i} \oplus v'_{t_{i-1}}$ for $i > 0$	\\
\hline
\multirow{2}{*}{$lead_t$, $trail_t$}	& Numbers of leading zeros and trailing zeros of \\
										& $xor'_t$, respectively\\
\hline
$XOR'_{lead_t = k}$ 					& $\{xor'_t \in XOR' $ $|$ $ lead_t = k\}$\\
\hline
$XOR'_{lead_t \leq k}$ 					& $\{xor'_t \in XOR' $ $|$ $ lead_t \leq k\}$\\
\hline
$rule = \langle a_0, a_1,... a_{z-1}\rangle$		& Approximation rule with a length of $z = |rule|$ \\
\hline
$cd = \langle c_0, c_1, ..., c_{63}\rangle$		& Leading zeros count distribution	\\
\hline
$C_{Total}$, $C_{App}$, $C_{Pre}$		& Total cost, approximation cost, presentation cost	\\
\hline
$nz$, $nz_f = \langle f_1, ..., f_{63} \rangle$,	& Non-zeros count, front non-zeros count array, \\
 $nz_r = \langle r_1, ..., r_{63} \rangle$	& rear non-zero count array of $\langle c_1, ..., c_{63}\rangle$ \\
\hline
\end{tabular}
\vspace{-10pt}
\end{table}

\section{Preliminaries}\label{sec:pre}

In this section, we give some preliminaries and definitions. Table~\ref{tbl:symbols} lists the symbols used frequently throughout this paper.

\begin{mydef}
\textbf{Floating-Point Time Series.} A floating-point time series (abbr. time series) is a time-ordered sequence of floating-point values $TS = \langle v_{t_0}, v_{t_1}, ... v_{t_{n-1}}\rangle$, where $v_{t_i}$ represents the value generated at time $t_i$. If there is no confusion, $v_{t_i}$ and its predecessor $v_{t_{i-1}}$ are denoted by $v_t$ and $v_{t-1}$, respectively. We also let $v_t$ or $v$ denote a floating-point value if we do not care about its position in a time series. $lead_t$ and $trail_t$ are the numbers of leading zeros and trailing zeros in the underlying storage of $v_t$, respectively.
\end{mydef}

The floating-point value $v$ can be of either a double-precision type (i.e., \textbf{double} value with 64 bits) or a single-precision type (i.e., \textbf{single} value with 32 bits). Like {\em Elf}~\cite{li2023elf} and Chimp~\cite{liakos2022chimp}, this paper mainly focuses on the compression for double values. However, our proposed algorithm can be easily transplanted for single values.

\begin{mydef}\label{def:dpc}
\textbf{Decimal Place Count $\alpha$ and Decimal Significand Count $\beta$~\cite{li2023elf}.} Given a floating-point value $v$, its decimal format is $DF(v) = \pm(d_{h-1}d_{h-2}...d_0.d_{-1}d_{-2}...d_{l})_{10}$, where $d_i \in \{0, 1, ..., 9\}$ for $l \leq i \leq h-1$, $d_{h - 1} \neq 0$ unless $h - 1 = 0$, and $d_l \neq 0$ unless $l = -1$. The decimal place count $\alpha$ is calculated by $DP(v) = |l|$. If for all $l < n \leq i \leq h - 1$, $d_i = 0$ but $d_{n - 1} \neq 0$, the decimal significand count $\beta$ is $DS(v) = n - l$. In the case of $v = 0$, we let $DS(v) = 0$.
\end{mydef}

For example, $DP(3.17) = 2$ and $DS(3.17) = 3$; $DP(-0.0317) = 4$ and $DS(-0.0317) = 3$. 

\begin{mydef}
\textbf{Erased Floating-Point Time Series.} Given $TS = \langle v_{t_0}, v_{t_1}, ... v_{t_{n-1}}\rangle$, its corresponding erased floating-point time series $TS' = \langle v'_{t_0}, v'_{t_1}, ... v'_{t_{n-1}}\rangle$ is obtained by:
\begin{equation}\label{equ:erasing}
v'_{t_i} = \left\{  
	\begin{array}{ll}
		0,	&	\text{if }v_{t_i} = 0\\
		Erase(v_{t_i}),	&	\text{if }v_{t_i} \neq 0
	\end{array}
\right.
\end{equation}

\noindent where $Erase(v_{t_i})$ is an erasing operation that sets the last $52 - (\lceil DP(v_{t_i}) \times log_2{10} \rceil + \lfloor log_2{|v_{t_i}|} \rfloor)$ bits to be zeros in the underlying storage of $v_{t_i}$~\cite{li2023elf}.
\end{mydef}

For example, as shown in Fig.~\ref{fig:ElfXOR}(b), by erasing the last $52 - (\lceil 2 \times log_2{10} \rceil + \lfloor log_2{|3.17|} \rfloor) = 44$ bits of $v_t = 3.17$, we can get $v'_t = Erase(3.17) = 3.1640625$.

Each value $v'_{t_i} \in TS'$ is supposed to contain much more trailing zeros than $v_{t_i} \in TS$. The work~\cite{li2023elf} proves that the erasing operation can erase as many as $x$ bits of $v_{t_i}$, where $51 - \beta log_2{10} < x < 53 - (\beta - 1) log_2{10}$. Here, $\beta = DS(v_{t_i})$ is the decimal significand count of $v_{t_i}$. To recover $v_{t_i}$ from $v'_{t_i}$, {\em Elf} performs the following restoring operation:
\begin{equation}\label{equ:restoring}
v_{t_i} = \left\{  
	\begin{array}{ll}
		0,	&	\text{if }v'_{t_i} = 0\\
		LeaveOut(v'_{t_i}, \alpha) + 10^{-\alpha},	&	\text{if }v'_{t_i} \neq 0
	\end{array}
\right.
\end{equation}

\noindent where $LeaveOut(v'_{t_i}, \alpha) = \pm(d_{h'-1}d_{h'-2}...d_0.d_{-1}...d_{-\alpha}\\\stkout{d_{-(\alpha + 1)}...d_{l'}})_{10}$ leaves out the digits after $d_{-\alpha}$ in $DF(v'_{t_i})$.

For the example shown in Fig.~\ref{fig:ElfXOR}, $v_t = LeaveOut(v'_t, 2) + 10^{-2} = 3.16\stkout{40625} + 0.01 = 3.17$.

\begin{mydef}
\textbf{Erased XORed Time Series.} Given $TS' = \langle v'_{t_0}, v'_{t_1}, ... v'_{t_{n-1}}\rangle$, its corresponding erased XORed time series $XOR' = \langle xor'_{t_0}, xor'_{t_1}, ... xor'_{t_{n-1}}\rangle$ is calculated by:
\begin{equation}\label{equ:xoring}
xor'_{t_i} = \left\{  
	\begin{array}{ll}
		v_{t_0},	&	\text{if }i = 0\\
		v'_{t_{i-1}} \oplus v'_{t_i},	&	\text{if } i > 0
	\end{array}
\right.
\end{equation}
\end{mydef}

\noindent where $\oplus$ is the bitwise XORing operation. Since both $v'_{t_{i-1}}$ and $v'_{t_i}$ are supposed to contain many trailing zeros, $xor_{t_i}$ is expected to have many trailing zeros as well. We let $XOR'_{lead_t = k} = \{xor'_t \in XOR' $ $|$ $ lead_t = k\}$ and $XOR'_{lead_t \leq k} = \{xor'_t \in XOR' $ $|$ $ lead_t \leq k\}$.

\section{Optimization for Leading Zeros}\label{sec:leading}

In this section, we first introduce the existing solutions for encoding leading zeros, and then propose an adaptive leading zeros encoding strategy with two main steps: locally best rule calculation and globally best rule calculation.

\subsection{Existing Solutions for Encoding Leading Zeros}

\begin{figure}[t]
  \centering
  \includegraphics[width=3.5in]{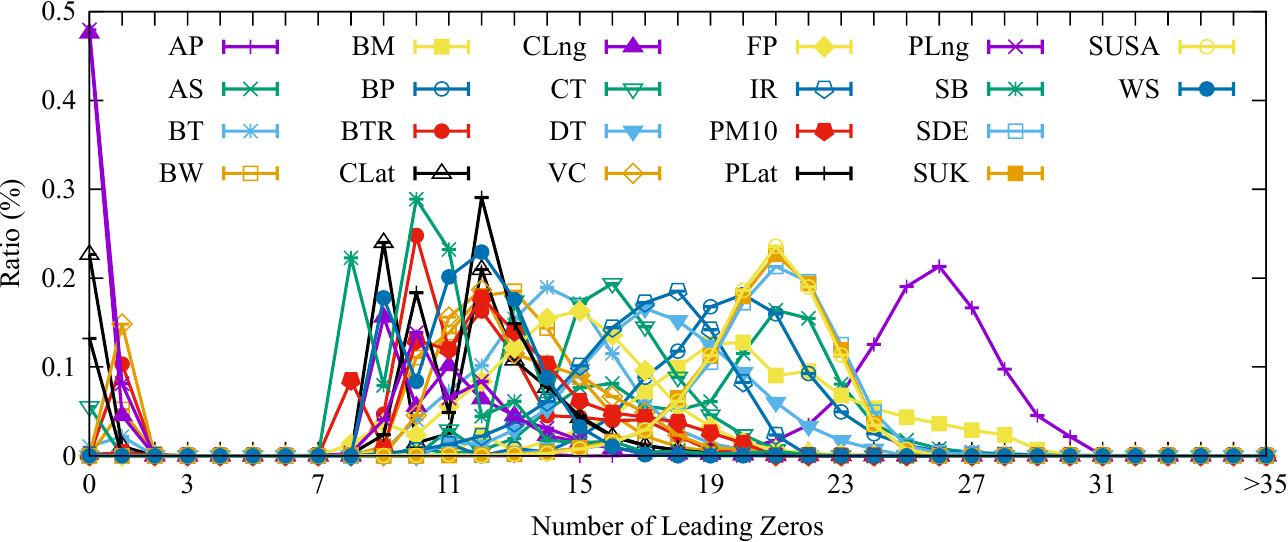}%
\vspace{-5pt}
  \caption{Distribution of Leading Zeros Number.}
  \label{fig:LeadingZerosCountDistribution}
  \vspace{-10pt}
\end{figure}

To encode the number of leading zeros $lead_t$, one basic solution is to leverage $\lceil log_2{64}\rceil = 6$ bits, since there could be 64 different values of $lead_t$ if $xor'_t \neq 0$. Note that we ignore the case when $xor'_{t} = 0$, because we will handle it specially as shown in Fig.~\ref{fig:ElfEncodingStrategy}. Figure~\ref{fig:LeadingZerosCountDistribution} shows the leading zeros distributions of various datasets (please refer to Section~\ref{sec:exp} for more details about these datasets). Considering $lead_t$ is rarely bigger than 31, Gorilla~\cite{pelkonen2015gorilla} proposes the first \textbf{approximation rule} that utilizes 5 bits to encode $lead_t$. Specifically, if $lead_t < 31$, Gorilla stores the real value of $lead_t$ with 5 bits. If $lead_t \geq 31$, Gorilla approximates it to 31, with an \textbf{approximation cost} of $lead_t - 31$. We here give the formal definitions of approximation rule, approximation cost and presentation cost.

\begin{mydef}\label{def:apprule}
\textbf{Approximation Rule, Approximation Cost and Presentation Cost.} An approximation rule (abbr. rule) is an ordered integer items array $rule = \langle a_0, a_1, ..., a_{z-1}\rangle$, where $0 \leq a_i < a_j \leq 63$ for all $0 \leq i < j < z$. Given a number of leading zeros $lead_t$, if $\exists i \in [0, z - 1)$, $a_i \leq lead_t < a_{i + 1}$, then $lead_t$ is approximated to $a_i$, denoted as $App(lead_t)=a_i$. If $lead_t \geq a_{z-1}$, then $lead_t$ is considered to be $a_{z-1}$, denoted as $App(lead_t) = a_{z-1}$. We call $App(lead_t)$ the approximated number of leading zeros, and $lead_t - App(lead_t)$ the approximation cost of $lead_t$ (as illustrated in Fig.~\ref{fig:AppRule}(a)). To represent an approximated number, it requires at least $\lceil log_2{z} \rceil$ bits, where $z = |rule|$. We call $\lceil log_2{z} \rceil$ the presentation cost.
\end{mydef}

\begin{figure}[htb]
  \centering
  \vspace{-10pt}
  \includegraphics[width=3.5in]{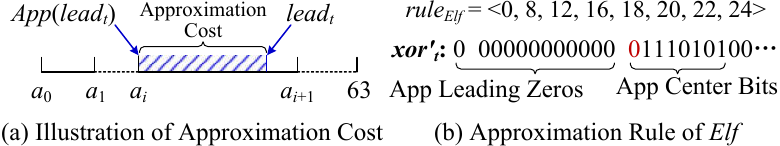}%
\vspace{-5pt}
  \caption{Illustration of Approximation Rule and Cost.}
  \label{fig:AppRule}
\end{figure}

For example, the rule proposed by Gorilla described above can be deemed as $\langle 0, 1, 2, ..., 31\rangle$ with a presentation cost of $\lceil log_2{32} \rceil = 5$. Figure~\ref{fig:AppRule}(b) presents the rule adopted by {\em Elf}, which has a presentation cost of $\lceil log_2{8} \rceil = 3$~\footnote{In this case, we map the value of 3 bits to each $a_i$ one-by-one, e.g., 0 to 0, 1 to 8, 2 to 12, and so on.}. As shown in Fig.~\ref{fig:AppRule}(b), given an $xor'_t$ with $lead_t = 13$, {\em Elf} will treat it as $App(lead_t) = 12$, and regard the 13$^{\text{th}}$ zero as a part of center bits. As a result, it would take one more bit to store the center bits, which is where the approximation cost comes from.

Because this paper aims to compress floating-point values losslessly, we should always set $a_0 = 0$. Otherwise, if $a_0 \neq 0$, we cannot deal with the cases when $lead_t < a_0$, especially for the future unseen values in a time series. 

\begin{mydef}
\textbf{Total Cost.} Given an erased XORed time series $XOR' = \langle xor'_{t_0}, xor'_{t_1}, ... xor'_{t_{n-1}}\rangle$ and its leading zeros count distribution $cd = \langle c_0, c_1, ..., c_{63}\rangle$, where $c_i = |XOR'_{lead_t = i}|$, the total cost of $rule = \langle a_0, a_1, ..., a_{z-1}\rangle$ on $XOR'$ is:
\begin{equation}\label{equ:totalcost2}
\begin{aligned}
	C_{Total} &= C_{App} + C_{Pre}\\
	  &= \sum\limits_{i=0}^{63}c_i \times (i - a_j) + \sum\limits_{i=0}^{63}c_i \times \lceil log_2{z} \rceil, \\
	  &\text{subject to: } 0 \leq a_j \leq i < a_{j+1} \text{ and } a_z = 64
\end{aligned}
\end{equation}

\noindent $C_{App} = \sum_{i=0}^{63}c_i \times (i - a_j)$ is the total approximation cost and $C_{Pre} = \sum_{i=0}^{63}c_i \times \lceil log_2{z} \rceil$ is the total presentation cost.
\end{mydef}

The number of rule items $z = |rule|$ plays a trade-off between $C_{App}$ and $C_{Pre}$. A bigger $z$ usually means a smaller $C_{App}$ but a larger $C_{Pre}$. Moreover, the value of each $a_i \in rule$ has a large impact on the total approximation cost. As shown in Fig.~\ref{fig:LeadingZerosCountDistribution}, different datasets have different distributions of leading zeros. Therefore, the fixed rule specified by {\em Elf} for all datasets is generally suboptimal.

\subsection{Adaptive Leading Zeros Encoding Strategy}


To find a rule that yields the minimum total cost, one na\"ive method is to enumerate all the possible rule item combinations, calculate the total cost of each combination, and find the one with the minimum total cost. However, this method is prohibitively expensive, since there are as many as $2^{64}$ possible rules. To this end, this paper formulates the best rule search problem as a dynamic programming process. 

In what follows, we first develop a method to find the best rule with exact $z$ rule items (i.e., locally best rule). Using this as a building block, we can find the globally best rule.

\subsubsection{Locally Best Rule Calculation} We describe the locally best rule calculation method from the following aspects.

\textbf{States Calculation.} Since we fix the total rule items in the locally best rule to be $z$, we consider only the total approximation cost $C_{App}$ but ignore the total presentation cost $C_{Pre}$, because all rules with the same number of rule items have the same $C_{Pre}$. We resort to a two-dimensional array $dp[*][*]$ to record the calculation state, where $d[i][j]$ denotes the total approximation cost when we set $a_j = i$ and only consider $xor'_t \in XOR'_{lead_t \leq i}$. As a consequence, we have the following state transition equation:
\begin{equation}\label{equ:statetransition}
\footnotesize
dp[i][j] = \left\{  
\begin{array}{ll}
	0,	 \text{\qquad\qquad\quad\;\;\, if } i = 0 \text{ and } j = 0 \\
	\sum\limits_{p = 0}^{i - 1} c_p \times p,	 \text{\qquad if } i > 0 \text{ and } j = 1\\
	\infty,	 \text{\qquad\qquad\quad\; if } i < j \\
	\min\limits_{j-1 \leq k \leq i-1}(dp[k][j-1] + \sum\limits_{p=k+1}^{i-1}c_p \times (p-k)),  \text{ others}
\end{array}
\right.
\end{equation}

We can understand Equation~(\ref{equ:statetransition}) with four cases.
\textbf{Case~\uppercase\expandafter{\romannumeral1}}: If $i = 0$ and $j = 0$, we set $a_0 = 0$. For all $xor'_t \in XOR'_{lead_t \leq 0}$, we have $lead_t = 0$ and $App(lead_t) = 0$, i.e., its approximation cost is 0, so $dp[0][0] = 0$. 
\textbf{Case~\uppercase\expandafter{\romannumeral2}}: If $i > 0$ and $j = 1$, we set $a_1 = i$. For all $xor'_t \in XOR'_{lead_t = i}$, we have $App(lead_t) = i$ with an approximation cost of 0. For each $xor'_t \in XOR'_{lead_t \leq i - 1}$, since $a_0$ must be set to 0, we have $App(lead_t) = a_0 = 0$ with an approximation cost of $lead_t$. As $c_p = |XOR'_{lead_t = p}|$, hence $dp[i][j] = \sum_{p=0}^{i-1}c_p \times p$.
\textbf{Case~\uppercase\expandafter{\romannumeral3}}: If $i < j$, we let $dp[i][j] = \infty$, since $a_j$ must be greater than or equal to $j$ in accordance with the definition of approximation rule (i.e., $a_x < a_y$ for all $0 \leq x < y$).
\textbf{Case~\uppercase\expandafter{\romannumeral4}}: If $i \geq j > 1$, we should try all cases of $k \in [j - 1, i - 1]$, and find the one that makes $dp[k][j - 1] + \sum_{p = k + 1} ^ {i - 1} c_p \times (p - k)$ minimized. The former (i.e., $dp[k][j - 1]$) is the state when we set $a_{j - 1} = k$ (therefore it requires $k \geq j - 1$), and the latter (i.e., $\sum_{p=k+1}^{i-1}c_p \times (p - k)$) is the summation of approximation costs of those leading zeros that are approximated to $k$. Since we make fully use of the intermediate calculation states, we can avoid a large number of repetitive computations.

With $dp[*][*]$, we can find the minimum total approximation cost of a rule whose number of items is $z$, i.e., 
\begin{equation}\label{equ:localAppCost}
C_{App} = \min\limits_{z-1 \leq k \leq 63}dp[k][z-1] + \sum_{p = k + 1}^{63}c_p\times(p - k)
\end{equation}

\noindent which is similar to Case~\uppercase\expandafter{\romannumeral4} in Equation~(\ref{equ:statetransition}) but we set $i = 64$ and $j = z$. We let $a_{z - 1} = k^*$, where  $k^* \in [z - 1, 63]$ is the value of $k$ that minimizes Equation~(\ref{equ:localAppCost}).

\textbf{Rule Items Calculation.} To obtain the items of the locally best rule based on $dp[*][*]$, we leverage another two-dimensional array $pre[*][*]$. Specifically, for Case~\uppercase\expandafter{\romannumeral1}, we set $pre[0][0] = -1$, which indicates that $a_0 = 0$ is the first rule item; for Case~\uppercase\expandafter{\romannumeral2}, we set $pre[i][1] = 0$ where $i > 0$, which means that the rule item just before $a_1 = i$  is equal to 0 (i.e., $a_0 = 0$); we ignore the Case~\uppercase\expandafter{\romannumeral3} for $pre[i][j]$ where $i < j$ because it certainly will not be the final result; for Case~\uppercase\expandafter{\romannumeral4}, we set $pre[i][j] = k^*_1$, where $k^*_1 \in [j-1, i - 1]$ is the value of $k$ that minimizes $dp[k][j - 1] + \sum_{p = k + 1}^{i-1}c_p \times (p - k)$. Using $k^*$ (for minimizing Equation~(\ref{equ:localAppCost})) and $pre[*][*]$, we can calculate $a_i$ in the locally best rule backwards (detailed in Algorithm~\ref{alg:localAppRule}).

\textbf{Pruning Strategies.} It would encounter efficiency issue for the calculation of $dp[*][*]$, since it requires four nested loops to implement Case~\uppercase\expandafter{\romannumeral4} in Equation~(\ref{equ:statetransition}) (i.e., $i$ from 0 to 63, $j$ from 0 to $z - 1$, $k$ from $j - 1$ to $i - 1$, and $p$ from $k + 1$ to $i - 1$), which results in a quartic time complexity. To resolve this issue, this paper proposes a set of pruning strategies. 
Essentially, a rule is a one-to-one mapping function (a.k.a. injective function) from $j \in [0, z - 1]$ to $i \in [0, 63]$ (i.e., $j\stackrel{a}{\longrightarrow}i$). 
Since $a_0$ should always be $0$, in the following, we consider only the mapping from $j \in [1, z - 1]$ to $i \in [1, 63]$. Our intuition is simple: for any $i' \in [0, 63]$, if $c_{i'} = 0$, then $i'$ should not be mapped by any $j$. The intuition is formalized as Theorem~\ref{theorem:prune1}.

\begin{Theorem}\label{theorem:prune1}
Given $cd = \langle c_0, c_1, ..., c_{63}\rangle$, let $nz = |\{c_i \in \langle c_1, ..., c_{63}\rangle $ $|$ $ c_i \neq 0\}|$, $rule = \langle 0, a_1, ..., a_{z-1}\rangle$ is the locally best rule with a length of $z$, if $z-1 \leq nz$, then for any $j \in [1, z - 1]$, $c_{a_j} \neq 0$.
\end{Theorem}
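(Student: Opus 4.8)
The plan is to argue by contradiction with a local exchange argument on the anchor positions. Suppose the locally best rule $R = \langle 0, a_1, \ldots, a_{z-1}\rangle$ has some index $j \in [1, z-1]$ with $c_{a_j} = 0$; I will build from $R$ another rule of the same length $z$ whose total approximation cost is strictly smaller, contradicting the minimality of $C_{App}$. (Recall that for a fixed $z$ the presentation cost $C_{Pre}$ is constant, so ``locally best'' means exactly minimal $C_{App}$.)

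First I would isolate the dependence of $C_{App}$ on a single anchor. Fixing every item of $R$ except $a_j$ and writing $a_j = x$ with $a_{j-1} < x < a_{j+1}$ (using the convention $a_z := 64$ for the boundary index $j = z-1$), only the values with $lead_t \in [a_{j-1}, a_{j+1})$ are affected, so
\begin{equation*}
g(x) = \sum_{i=a_{j-1}}^{x-1} c_i\,(i - a_{j-1}) + \sum_{i=x}^{a_{j+1}-1} c_i\,(i - x) + \mathrm{const}.
\end{equation*}
A direct computation of the forward difference gives $g(x+1) - g(x) = c_x\,(x - a_{j-1}) - \sum_{i=x+1}^{a_{j+1}-1} c_i$, and substituting $c_{a_j} = 0$ yields $g(a_j+1) - g(a_j) = -\sum_{i=a_j+1}^{a_{j+1}-1} c_i \le 0$. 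This single identity drives the argument and splits it into two cases.

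In the first case the gap above $a_j$ is nonempty, i.e. $\sum_{i=a_j+1}^{a_{j+1}-1} c_i > 0$ (which also forces $a_j + 1 < a_{j+1}$, so the move is legal). Then sliding $a_j$ up to $a_j + 1$ strictly decreases $C_{App}$, contradicting optimality; this case needs no hypothesis on $z$. The delicate case is the second, where the difference is only an equality because $c_{a_j+1} = \cdots = c_{a_{j+1}-1} = 0$; together with $c_{a_j} = 0$ this means the whole interval $[a_j, a_{j+1})$ carries zero count, so the anchor $a_j$ approximates no value and deleting it leaves $C_{App}$ unchanged. This is exactly where the hypothesis enters: after deleting $a_j$ the non-trivial anchors occupy at most $z - 2$ nonzero positions, and since $nz \ge z - 1 > z - 2$ there remains a nonzero position $m \in [1,63]$ used by no anchor. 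Because $m > a_0 = 0$, a surviving anchor $a_\ell$ lies below $m$, and inserting a fresh anchor at $m$ re-anchors the values with $lead_t \ge m$ in its block to the strictly closer value $m$, lowering $C_{App}$ by at least $(m - a_\ell)\,c_m > 0$. This delete-then-insert operation returns a valid sorted rule of length $z$ with strictly smaller cost, again a contradiction.

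I expect the main obstacle to be precisely this empty-gap case: local perturbations there yield only cost equality rather than a strict decrease, so no purely local argument can finish, and one must invoke the global counting hypothesis $z-1 \le nz$ to relocate the redundant anchor onto an unoccupied nonzero position. Secondary care is needed for the boundary index $j = z-1$ (handled by the convention $a_z = 64$) and for checking that the delete-and-reinsert step always yields a strictly admissible, increasing rule.
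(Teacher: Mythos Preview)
Your argument is correct, and it differs in structure from the paper's. The paper proceeds by a single block-shift remapping: having fixed a zero anchor $a_{j'}=i'_0$, it locates an unmapped nonzero position $i^*$ (guaranteed by $z-1\le nz$) and then shifts a contiguous block of anchors toward $i^*$, splitting into two cases depending on whether $i^*$ lies to the right or to the left of $i'_0$; the minimality choice of $i^*$ silently ensures that every non-anchor position swept over has zero count, so the shift loses nothing and gains strictly at $i^*$. Your decomposition is instead by whether the interval $[a_j,a_{j+1})$ carries any mass: the discrete-derivative identity dispatches the nonempty case by a single one-step slide with no appeal to the global hypothesis, and only the degenerate empty-gap case requires the counting argument and the delete--reinsert move. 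What your route buys is a cleaner separation between the purely local obstruction and the place where $z-1\le nz$ is genuinely needed, together with an explicit inequality $(m-a_\ell)c_m>0$ quantifying the gain; what the paper's route buys is a uniform description (one remap operation in both cases) at the cost of a slightly more implicit justification of why the shift never increases cost.
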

\begin{proof}
It is easy to understand that if $a_j = i$, for all $xor'_t \in XOR'_{lead_t = i}$, their approximation costs can reach the minimum value (i.e., $0$). 

Suppose there exists one $j' \in [1, z - 1]$ such that $a_{j'} = i'_0$ and $c_{i'_0} = 0$. Next, we map each $j \in [1, j') \cup (j', z - 1]$ to $i \in [1, 63]$ (i.e., map $z - 2$ values to $[1, 63]$, where there are still $nz$ unmapped non-zeros in $\langle c_1, ..., c_{63}\rangle$). Because $z - 1 \leq nz$, we have $z - 2 < nz$. Therefore, there must exist at least one $i^* \in [1, 63]$, where $c_{i^*} \neq 0$ but $i^*$ is not mapped. There are two cases: 1) there exists $i^* > i'_0$; 2) there does not exist $i^* > i'_0$ so $i^* < i'_0$. For the former case, we let $i^* = i'_{k +1}$ is the smallest one that $c_{i'_{k+1}} \neq 0$ but without mapping from $j \in [1, z - 1]$. As shown in Fig.~\ref{fig:Prune1}(a), we can remap $\langle j', j'+1, ..., j'+k\rangle$ to $\langle i'_1, i'_2, ..., i'_{k+1}\rangle$, which reduces the approximation costs after $i'_{0}$ but does not affect the approximation costs before $i'_{0}$, i.e., it reduces the total approximation cost. For the latter case, we let $i'_{-k-1} < i'_0$ is the biggest one that $c_{i'_{-k-1}} \neq 0$ but without mapping from $j \in [1, z - 1]$. As shown in Fig.~\ref{fig:Prune1}(b), we can remap $\langle j'-k, j'-k + 1, ..., j'\rangle$ to $\langle i'_{-k - 1}, i'_{-k}, ..., i'_{-1} \rangle$, which reduces the approximation costs before $i'_0$ but does not affect the approximation costs after $i'_0$, i.e., it reduces the total approximation cost as well. That is, for both cases, we can find another rule whose total approximation cost is smaller than the locally best one, which contradicts our hypothesis.
\end{proof}
\begin{figure}[htb]
  \centering
  \vspace{-10pt}
  \includegraphics[width=3.5in]{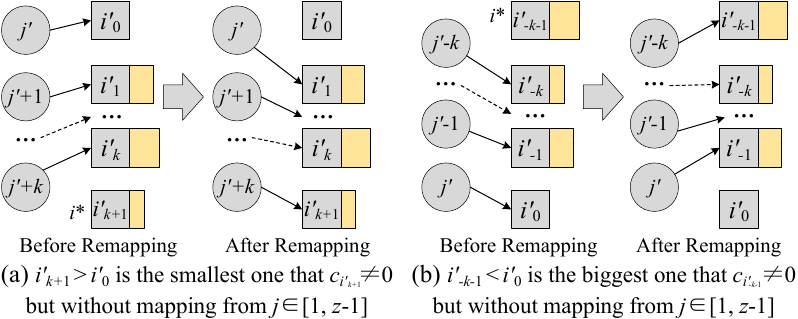}%
\vspace{-5pt}
  \caption{Demonstration for Proof of Theorem~\ref{theorem:prune1}.}
  \label{fig:Prune1}
\end{figure}

Theorem~\ref{theorem:prune1} can act as a pruning rule (i.e., \textbf{Zero Pruning}) for the calculation of $dp[*][*]$. We can also leverage the number of non-zeros in $cd = \langle c_0, c_1, ..., c_{63}\rangle$ to accelerate the calculation process, which is formalized as Theorem~\ref{theorem:prune2}.

\begin{Theorem}\label{theorem:prune2}
Given $cd = \langle c_0, c_1, ..., c_{63}\rangle$, let $nz = |\{c_i \in \langle c_1, ..., c_{63}\rangle $ $|$ $ c_i \neq 0\}|$, $nz_{f} = \langle f_1, ..., f_{63}\rangle$ where $f_j = |\{c_i \in \langle c_1, ..., c_{63}\rangle $ $|$ $ c_i \leq j $ and $ c_i \neq 0\}|$, $nz_{r} = \langle r_1, ..., r_{63}\rangle$ where $r_j = |\{c_i \in \langle c_1, ..., c_{63}\rangle $ $|$ $ c_i > j $ and $ c_i \neq 0\}|$, $rule = \langle 0, a_1, ..., a_{z-1}\rangle$ is the locally best rule with a length of $z$ where $z - 1 \leq nz$, for any $i' \in [1, 63]$ and $j' \in [1, z - 1]$, if $f_{i'} < j'$ or $r_{i'} < z - 1 - j'$, then $a_{j'} = i'$ will not be in $rule$.
\end{Theorem}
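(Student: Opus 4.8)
The plan is to argue by contradiction and reduce the claim to a simple pigeonhole count on the non-zero positions that must flank $i'$. Suppose the locally best rule $\langle 0, a_1, \ldots, a_{z-1}\rangle$ did place $a_{j'} = i'$. Because the standing hypothesis $z-1 \leq nz$ matches the requirement of Theorem~\ref{theorem:prune1}, I may invoke it to conclude that every interior item sits on a non-zero position, i.e. $c_{a_j} \neq 0$ for all $j \in [1, z-1]$; in particular $c_{i'} \neq 0$. Combining this with the strict monotonicity $0 = a_0 < a_1 < \cdots < a_{z-1} \leq 63$ that is built into the definition of an approximation rule, the $j'-1$ items $a_1, \ldots, a_{j'-1}$ must be distinct non-zero positions lying strictly below $i'$, while the $z-1-j'$ items $a_{j'+1}, \ldots, a_{z-1}$ must be distinct non-zero positions lying strictly above $i'$.

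First I would dispatch the front condition. Reading $f_{i'}$ as the number of non-zero positions in $[1, i']$, the number of non-zero positions strictly below $i'$ is exactly $f_{i'} - 1$, since $i'$ itself contributes one (as $c_{i'} \neq 0$). The $j'-1$ distinct positions $a_1, \ldots, a_{j'-1}$ must fit among these, so $j'-1 \leq f_{i'}-1$, that is $f_{i'} \geq j'$. Hence the hypothesis $f_{i'} < j'$ makes the placement impossible, contradicting the assumption.

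The rear condition is symmetric. Reading $r_{i'}$ as the number of non-zero positions in $[i'+1, 63]$, the $z-1-j'$ distinct positions $a_{j'+1}, \ldots, a_{z-1}$ must fit among these, forcing $r_{i'} \geq z-1-j'$. Thus the hypothesis $r_{i'} < z-1-j'$ again yields a contradiction. Since either branch of the disjunction in the statement independently rules out $a_{j'} = i'$, this establishes the claim.

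The bulk of the argument is bookkeeping, so the only place I expect friction is making the two counts line up precisely: in particular the off-by-one in the front count, where I must use $c_{i'} \neq 0$ (supplied by Theorem~\ref{theorem:prune1}) to convert ``$f_{i'}$ non-zeros in $[1,i']$'' into ``$f_{i'}-1$ non-zeros strictly below $i'$,'' and the correct reading of $f_j$ and $r_j$ as positional front/rear counts rather than value-threshold counts. I would also flag explicitly that the appeal to Theorem~\ref{theorem:prune1} is what does the real work here; without $z-1 \leq nz$ the interior items need not land on non-zero positions and the counting bound would collapse.
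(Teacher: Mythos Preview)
Your proof is correct and follows essentially the same approach as the paper: invoke Theorem~\ref{theorem:prune1} to force every interior item onto a non-zero position, then use a pigeonhole count on the non-zero positions in $[1,i']$ and $(i',63]$ to derive $f_{i'} \geq j'$ and $r_{i'} \geq z-1-j'$. The paper's version groups the count slightly differently (it fits all of $a_1,\ldots,a_{j'}$ into the $f_{i'}$ non-zero slots of $[1,i']$ rather than splitting off $i'$), but the argument and the off-by-one resolve identically; your explicit flagging of the positional reading of $f_j$ and $r_j$ is a helpful clarification the paper leaves implicit.
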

\begin{proof}
According to Theorem~\ref{theorem:prune1}, if $z - 1 \leq nz$, each $j \in [1, z - 1]$ must be mapped into a distinct $i \in [1, 63]$ where $c_i \neq 0$. $f_{i'}$ records the number of non-zeros in $\langle c_1, c_2, ..., c_{i'} \rangle$. If $f_{i'} < j'$, it means that there are not enough $i \in [1, i']$ where $c_i \neq 0$ for the mapping of all $j \in [1, j']$. Similarly, $r_{i'}$ records the number of non-zeros in $\langle c_{i'+1}, c_{i' + 2}, ..., c_{63} \rangle$. If $r_{i'} < z - 1 - j'$, it means that there are not enough $i \in [i'+1, 63]$ where $c_i \neq 0$ for the mapping of all $j \in [j' + 1, z - 1]$. Therefore, if $f_{i'} < j'$ or $r_{i'} < z - 1 - j'$, $a_{j'} = i'$ will not in the final result.
\end{proof}

Theorem~\ref{theorem:prune2} provides another pruning rule (i.e., \textbf{Front-Rear Pruning}) for the calculation of $dp[*][*]$. Note that for both Theorem~\ref{theorem:prune1} and Theorem~\ref{theorem:prune2}, it requires $z - 1 \leq nz$. To handle the case of $z - 1 > nz$, we have Theorem~\ref{theorem:ZNZ}.

\begin{Theorem}\label{theorem:ZNZ}
Given $cd = \langle c_0, c_1, ..., c_{63}\rangle$, let $nz = |\{c_i \in \langle c_1, ..., c_{63}\rangle $ $|$ $ c_i \neq 0\}|$, $z$ is the target number of rule items. For $rule = \langle 0, a_1, ..., a_{z-1}\rangle$ where $z - 1 \geq nz$, if for any $i \in [1, 63]$ that $c_i \neq 0$, there exists one $j \in [1, z - 1]$ such that $a_j = i$, then $rule$ is the locally best one, no matter what the actual value of $a_{j'}$ is for those $j'$ satisfying $c_{a_{j'}} = 0$.
\end{Theorem}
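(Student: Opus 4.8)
The plan is to recognise that this theorem lives in the opposite regime from Theorems~\ref{theorem:prune1} and~\ref{theorem:prune2}: when $z-1 \geq nz$ there are at least as many usable rule items as there are non-zero leading-zeros counts, so one can afford to place a rule item on \emph{every} non-zero count and drive the approximation cost all the way to its absolute floor. Accordingly, my strategy is to show that $0$ is a universal lower bound on $C_{App}$ over all length-$z$ rules, and that the hypothesised rule attains it; optimality then follows immediately.

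First I would record the elementary monotonicity of the approximation map. By Definition~\ref{def:apprule}, $App(lead_t)$ is the largest rule item not exceeding $lead_t$ (and $a_0=0\le lead_t$ guarantees such an item always exists), so $App(lead_t)\le lead_t$ and every per-value approximation cost $lead_t-App(lead_t)$ is non-negative. Hence $C_{App}=\sum_{i=0}^{63}c_i\,(i-App(i))\ge 0$ for \emph{any} rule of length $z$, giving the claimed lower bound. Second, I would show the hypothesised rule meets this bound by examining each summand $c_i\,(i-App(i))$: if $c_i=0$ the term vanishes trivially, while if $c_i\neq 0$ the hypothesis supplies some $j\in[1,z-1]$ with $a_j=i$, so $i$ is itself a rule item, is the largest rule item $\le i$, and thus $App(i)=i$, making the term zero as well. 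Summing over $i$ yields $C_{App}=0$, which matches the lower bound, so no competing length-$z$ rule can do strictly better on approximation cost.

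The remaining work is to discharge the ``no matter what'' clause and to pass from $C_{App}$ to $C_{Total}$. For the clause, I would observe that an item $a_{j'}$ sitting at a zero-count position ($c_{a_{j'}}=0$) cannot perturb the argument: for every $i$ with $c_i\neq 0$ the item $i$ is still present, hence still the largest rule item $\le i$, so $App(i)=i$ stays fixed and $C_{App}=0$ irrespective of where the slack items land. Finally, since ``locally best'' fixes the length at $z$, all competitors share the identical presentation cost $C_{Pre}=\big(\sum_{i=0}^{63}c_i\big)\lceil \log_2 z\rceil$; minimising $C_{App}$ is therefore equivalent to minimising $C_{Total}=C_{App}+C_{Pre}$, and the hypothesised rule is locally best.

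I do not expect a genuine obstacle here, since the argument reduces to ``a non-negative quantity hits zero, so it is minimal.'' The one point requiring care is the slack-item clause: I must state precisely that adding items at counts with $c_i=0$ neither raises nor lowers an already-zero cost. It is also worth a one-line remark that the hypothesis is non-vacuous, i.e.\ that a covering placement of the $z-1$ strictly increasing items over the $nz$ mandatory non-zero positions is feasible within $[1,63]$; this holds because rule items are distinct values in $[0,63]$, forcing $z\le 64$ and hence $z-1\le 63$, which leaves room for the $nz$ required items together with the $z-1-nz$ slack items among the remaining positions.
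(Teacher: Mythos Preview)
Your proposal is correct and follows essentially the same approach as the paper: compute that the hypothesised rule achieves $C_{App}=0$ by splitting the sum into $c_i=0$ and $c_i\neq 0$ terms, and conclude optimality because $0$ is the minimum possible approximation cost. Your write-up is in fact more careful than the paper's (which leaves the lower bound $C_{App}\ge 0$ implicit and does not mention the passage from $C_{App}$ to $C_{Total}$), but the underlying argument is identical.
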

\begin{proof}
The total approximation cost is $\sum_{i = 0}^{63}c_i \times (i - a_j)$. For those $c_i \neq 0$, we have $a_j = i$, i.e., $c_i \times (i - a_j) = 0$; for those $c_i = 0$, we have $c_i \times (i - a_j) = 0$ no matter what the actual value of $a_j$ is. To this end, $rule$ is the locally best rule with an approximation cost of 0.
\end{proof}

Theorem~\ref{theorem:ZNZ} indicates that, to get the locally best rule with a length of $z$, if $z - 1 > nz$, we can first find a rule  $rule$ with the length of $nz + 1$, and then adjust $rule$ to make $|rule| = z$. In this paper, we assign the smallest unmapped value $i$ that satisfies $c_i = 0$ to $a_j$. Other more complex assignment strategies are left as open problems.

\textbf{Locally Best Rule Algorithm.} Algorithm~\ref{alg:localAppRule} depicts the pseudo-code of locally best rule calculation, which consists of three main parts: \textit{Initialization}, \textit{States Calculation} and \textit{Rule Items Calculation}.

\begin{algorithm}[t]
\small
	\caption{$LocalAppRule(cd = \langle c_0, c_1, ..., c_{63}\rangle, z)$}\label{alg:localAppRule}
	\tcc{Initialization}
	$nz \leftarrow |\{c_i \in \langle c_1, ..., c_{63}\rangle $ $|$ $ c_i \neq 0\}|$\label{alg:localAppRule:init:start}\;
	$nz_{f} \leftarrow \langle f_1, ..., f_{63}\rangle$, where $f_j = |\{c_i \in \langle c_1, ..., c_{63}\rangle $ $|$ $ c_i \leq j $ and $ c_i \neq 0\}|$\;
	$nz_{r} \leftarrow \langle r_1, ..., r_{63}\rangle$, where $r_j = |\{c_i \in \langle c_1, ..., c_{63}\rangle $ $|$ $ c_i > j $ and $ c_i \neq 0\}|$\;
	$z' \leftarrow min(z, nz + 1)$\label{alg:localAppRule:init:end}\;
	\tcc{States Calculation}
	$dp[0][0] \leftarrow 0$; $pre[0][0] \leftarrow -1$\label{alg:localAppRule:dp:start}; \tcp{Case~\uppercase\expandafter{\romannumeral1}}
	\For{$i \leftarrow 1 \text{\rm\ to } 63$}{
		\If{$c_i = 0$}{
			\Continue;\label{alg:localAppRule:prune1}	\tcp{Zero Pruning}
		}
		\For{$j \leftarrow 1 \text{\rm\ to } min(i, z'-1)$\label{alg:localAppRule:case3}}{
			\If(\tcp*[h]{Case~\uppercase\expandafter{\romannumeral2}}){$i > 0 {\rm\ and\ } j = 1$\label{alg:localAppRule:dp:case2:start}}{
				$dp[i][j] \leftarrow \sum_{p = 0}^{i - 1}c_p \times p$; $pre[i][j]=0$\;\label{alg:localAppRule:dp:case2:end}
			}
			\Else(\tcp*[h]{Case~\uppercase\expandafter{\romannumeral4}}\label{alg:localAppRule:case4:start}){
				\If{$f_i < j {\rm\ or\ } r_i < z' - 1 - j$}{
					\Continue;\label{alg:localAppRule:prune2}\tcp{Front-Rear Pruning}			
				}
				$dp[i][j] \leftarrow \min\limits_{j-1 \leq k \leq i-1}(dp[k][j-1] + \sum_{p=k+1}^{i-1}c_p \times (p-k))$\label{alg:localAppRule:dp1}\;
				$pre[i][j] \leftarrow k^*_1$\label{alg:localAppRule:dp:end};\tcp{$k^*_1$ minimizes Line~\ref{alg:localAppRule:dp1}}
			}
		}		
	}
	\tcc{Rule Items Calculation}
	$C_{App} \leftarrow \min\limits_{z'-1 \leq k \leq 63}dp[k][z'-1] + \sum_{p = k + 1}^{63}c_p\times(p - k)$\label{alg:localAppRule:ar:start}\;
	Set $k^*$ as the value of $k$ that minimizes Line~\ref{alg:localAppRule:ar:start}\;
	$rule = \langle \ \rangle$; $i = 1$\;
	\While{$k^* \neq -1$}{
		$rule.prepend(k^*)$; $k^* \leftarrow pre[k^*][z' - i]$; $i++$\;
	}
	Adjust $rule$ to make $|rule|=z$\;
	\Return{$(rule, C_{App})$}\;\label{alg:localAppRule:ar:end}
\end{algorithm}

\textit{Initialization} (Lines~\ref{alg:localAppRule:init:start}-\ref{alg:localAppRule:init:end}). We first initialize $nz$, $nz_f$ and $nz_r$ based on the leading zeros count distribution $cd = \langle c_0, c_1, ..., c_{63} \rangle$. In Line~\ref{alg:localAppRule:init:end}, we assign the smaller value between $z$ and $nz + 1$ to $z'$, which means we will find a rule with a length of $nz + 1$ first if $z - 1 > nz$ according to Theorem~\ref{theorem:ZNZ}.

\textit{States Calculation} (Lines~\ref{alg:localAppRule:dp:start}-\ref{alg:localAppRule:dp:end}). In this part, we calculate the states in three cases, i.e., Case~\uppercase\expandafter{\romannumeral1} in Line~\ref{alg:localAppRule:dp:start}, Case~\uppercase\expandafter{\romannumeral2} in Lines~\ref{alg:localAppRule:dp:case2:start}-\ref{alg:localAppRule:dp:case2:end} and Case~\uppercase\expandafter{\romannumeral4} in Lines~\ref{alg:localAppRule:case4:start}-\ref{alg:localAppRule:dp:end}. Note that Case~\uppercase\expandafter{\romannumeral3} is considered implicitly, since we enforce that $j$ is not larger than $min(i, z'-1)$ in Line~\ref{alg:localAppRule:case3}. During the calculation of $dp[*][*]$, we apply the two pruning strategies, i.e., Zero Pruning and Front-Rear Pruning, in Line~\ref{alg:localAppRule:prune1} and Line~\ref{alg:localAppRule:prune2}, respectively.

\textit{Rule Items Calculation} (Lines~\ref{alg:localAppRule:ar:start}-\ref{alg:localAppRule:ar:end}). In this part, we first get the minimum total approximation cost $C_{App}$, and then calculate the rule items backwards with the help of $pre[*][*]$. After that, we adjust $rule$ by making $|rule| = z$ to meet the requirement of rule item number. Finally, the locally best rule and its total approximation cost are returned.

\textbf{Complexity Analysis.} In Algorithm~\ref{alg:localAppRule}, two two-dimensional arrays (i.e., $dp[*][*]$ and $pre[*][*]$) are utilized (but both are only half used). Therefore, the \textbf{space complexity} of Algorithm~\ref{alg:localAppRule} is $\mathcal{O}(64\times min(z, nz + 1)) \approx \mathcal{O}(64\times nz)$. The most time-consuming part is states calculation, in which there are four nested loops, i.e., $i$ from 1 to 63, $j$ from 1 to $min(i, z' - 1)$, $k$ from $j - 1$ to $i - 1$, and $p$ from $k + 1$ to $i - 1$. Thanks to Zero Pruning, $i$ loops at most $nz$ times actually. We can also apply the two pruning strategies when looping $k$.  Hence, the \textbf{time complexity} of Algorithm~\ref{alg:localAppRule} is $\mathcal{O}(nz \times min(z, nz + 1) \times nz \times 64) \approx \mathcal{O}(64\times nz^3)$.

To better illustrate Algorithm~\ref{alg:localAppRule}, we give an example as shown in Fig.~\ref{fig:LocalAppRuleExample}, in which $|cd|$ is set as 5 for simplification. Above the table, the values of $cd$, $z$, $nz$, $z'$, $nz_f$ and $nz_r$ are displayed, respectively. In the table, we show the calculation process. Here, $i$ loops from 0 to $|cd| - 1 = 4$, while $j$ loops from 1 to $min(i, z' - 1 = 2)$. As $c_1 = c_2 = 0$, we perform Zero Pruning when $i = 1$ or $2$. When $i = 3$ and $j = 2$, since $f_3 < 2$, we perform Front-Rear Pruning. If $i = 4$ and $j = 1$, because $r_4 < z' - 1 - 1$, we perform another Front-Rear Pruning. Subsequently, $k^*$ is assigned with 4 since it achieves the minimum approximation cost $C_{App} = 0$. After that, $rule = \langle 0, 3, 4 \rangle$ can be obtained with the help of $pre[*][*]$ (see the arrows in Fig.~\ref{fig:LocalAppRuleExample}). Because $|rule| < z$, we should adjust it by adding one value into it. Although there are two unmapped values of $i$ (i.e., 1 and 2), we prefer the smaller one (i.e., 1). Eventually, we have $rule = \langle 0, 1, 3, 4 \rangle$.

\begin{figure}[t]
  \centering
  \includegraphics[width=3.4in]{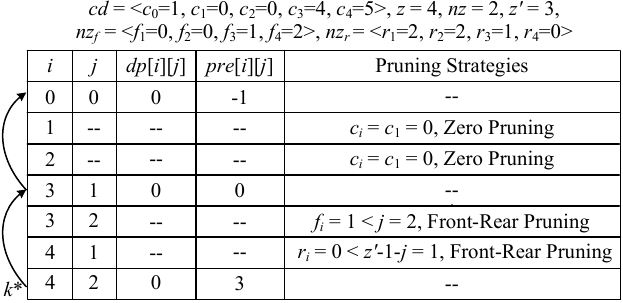}%
\vspace{-5pt}
  \caption{Example of Locally Best Rule.}
  \label{fig:LocalAppRuleExample}
  \vspace{-10pt}
\end{figure}

\subsubsection{Globally Best Rule Calculation} To obtain the globally best rule, one basic solution is to enumerate all possible values of $z$. For each value of $z$, we call Algorithm~\ref{alg:localAppRule} to get the locally best rule. Among these locally best rules, we select the one with the minimum total cost. Since there are as many as 64 possible values of $z$ (i.e., from 0 to 63), the basic solution is rather time-consuming.

To accelerate the process of globally best rule calculation, this paper proposes a set of optimizations.

\textbf{Opt~\uppercase\expandafter{\romannumeral1}: Exponential Values Only.} Recall that in Equation~(\ref{equ:totalcost2}), for $rule = \langle a_0, a_1, ..., a_{z - 1}\rangle$ with a length of $z$, its total cost is composed of two parts: the total approximation cost $C_{App} = \sum_{i = 0} ^ {63} c_i \times (i - a_j)$ and the total presentation cost $C_{Pre} = \sum_{i = 0} ^ {63} c_i \times \lceil log_2{z} \rceil$. Given another rule $rule' = \langle a_0, a_1, ...a', ..., a_{z - 1}\rangle$ with a length of $z + 1$, if $\lceil log_2{z} \rceil = \lceil log_2{(z + 1)} \rceil$, then $rule'$ has the same total presentation cost with $rule$. Besides, adding $a'$ to $rule$ will not increase the total approximation cost since we fix $a_j$ for all $j \in [0, z - 1]$. To this end, the total cost of $rule'$ is not higher than $rule$. Based on this observation, we can only enumerate $z \in \{2^0, 2^1, 2^2, 2^3, 2^4, 2^5, 2^6\}$, since for each $rule$ with a length of $z \in (2^k, 2^{k + 1})$, $0 \leq k \leq 5$, we can construct another $rule'$ with a length of $2^{k + 1}$, such that the total cost of $rule'$ is no more than that of $rule$.

\textbf{Opt~\uppercase\expandafter{\romannumeral2}: Skipping Some Exponential Values.} Theorem~\ref{theorem:ZNZ} demonstrates that, if $z - 1 \geq nz$, we can always find a $rule$ with the minimum approximation cost of 0. Combining with Opt~\uppercase\expandafter{\romannumeral1}, we can set the maximum value of $z$ as $2^{\lceil log_2{(nz + 1)}\rceil}$, because once $z$ reaches $2^{\lceil log_2{(nz + 1)}\rceil}$, increasing $z$ does not reduce the total approximation cost (it is already 0), but rather increases the total presentation cost. Furthermore, we find that for any $cd = \langle c_0, c_1, ..., c_{63}\rangle$, the total cost of the locally best rule with a length of $2^5$ is always no more than the one with a length of $2^6$. If $z = 2^6$, the locally best rule is $rule_1 = \langle 0, 1, ..., 63 \rangle$, whose total cost is $C_{Total1} = 0 + \sum_{i = 0}^{63} c_i \times 6$. If $z = 2^5$, we can construct a $rule_2 = \langle 0, 2, ..., 62 \rangle$, whose total cost is $C_{Total2} = c_1 + c_3 + ... + c_{63} + \sum_{i=0}^{63} c_i \times 5$. Here, we have $C_{Total1} - C_{Total2} = c_2 + c_4 + ... + c_{62} \geq 0$. Overall, we consider only $z \leq min(2^{\lceil log_2{(nz + 1)}\rceil}, 2^5)$.

\textbf{Opt~\uppercase\expandafter{\romannumeral3}: Presentation Cost Pruning.} As the total approximation cost of a rule will never be negative, given two rules $rule_1$ and $rule_2$, if the total presentation cost of $rule_2$ is greater than or equal to the total cost of $rule_1$, then the total cost of $rule_2$ must be greater than or equal to that of $rule_1$. We can leverage this as a pruning strategy to avoid the more expensive computations of the locally best rule.

\textbf{Globally Best Rule Algorithm.} Algorithm~\ref{alg:globalAppRule} presents the pseudo-code of the globally best rule calculation. We first calculate the values of $nz$ and $TotalCount$ based on $cd$ (Lines~\ref{alg:globalAppRule:init:start}-\ref{alg:globalAppRule:init:end}). $rule$ and $C_{Total}$ are the current globally best rule and its corresponding total cost, respectively. In Line~\ref{alg:globalAppRule:opt2}, by applying Opt~\uppercase\expandafter{\romannumeral2}, the number of bits $ln$ used for leading zeros encoding is enumerated from 0 to $ min(\lceil log_2{(nz + 1)}\rceil, 5)$ instead of 0 to 6. The Opt~\uppercase\expandafter{\romannumeral3} is employed in Lines~\ref{alg:globalAppRule:opt3:start}-\ref{alg:globalAppRule:opt3:end}. Specifically, once the total presentation cost $C_{Pre}$ of this loop is greater than or equal to the current minimum total cost $C_{Total}$, the loop is terminated. Otherwise, we consider the exponential value $z=2^{ln}$ in Line~\ref{alg:globalAppRule:opt1} (i.e., Opt~\uppercase\expandafter{\romannumeral1}). In Lines~\ref{alg:globalAppRule:call:start}-\ref{alg:globalAppRule:call:end}, we call Algorithm~\ref{alg:localAppRule} and update $rule$ and $C_{Total}$ if necessary. Finally, the globally best rule is returned.

\begin{algorithm}[t]
\small
	\caption{$GlobalAppRule(cd = \langle c_0, c_2, ..., c_{63}\rangle)$}\label{alg:globalAppRule}
	$nz \leftarrow |\{c_i \in \langle c_1, ..., c_{63}\rangle $ $|$ $ c_i \neq 0\}|$\;\label{alg:globalAppRule:init:start}
	$TotalCount \leftarrow \sum_{i = 0}^{63} c_i$\;\label{alg:globalAppRule:init:end}
	$rule \leftarrow \langle \ \rangle$; $C_{Total} \leftarrow \infty$\;
	\For(\tcp*[h]{Opt~\uppercase\expandafter{\romannumeral2}}){$ln \leftarrow 0 \text{\rm\ to } min(\lceil log_2{(nz + 1)}\rceil, 5)$\label{alg:globalAppRule:opt2}}{
		$C_{Pre} \leftarrow TotalCount \times ln$\;
		\If{$C_{Pre} \geq C_{Total}$\label{alg:globalAppRule:opt3:start}} {
			\Break;\label{alg:globalAppRule:opt3:end}	\tcp{Opt~\uppercase\expandafter{\romannumeral3}}	
		}
		$z \leftarrow 2^{ln}$; \label{alg:globalAppRule:opt1}\tcp{Opt~\uppercase\expandafter{\romannumeral1}}
		$(rule_{Local}, C_{App}) \leftarrow LocalAppRule(ca, z)$\;\label{alg:globalAppRule:call:start}
		\If{$C_{App} + C_{Pre} < C_{Total}$}{
			$C_{Total} \leftarrow C_{App} + C_{Pre}$\;
			$rule \leftarrow rule_{Local}$\;\label{alg:globalAppRule:call:end}
		}
	}
	\Return{$rule$}\;
\end{algorithm}

\textbf{Complexity Analysis.} Algorithm~\ref{alg:globalAppRule} calls Algorithm~\ref{alg:localAppRule} sequentially, so the \textbf{space complexity} of Algorithm~\ref{alg:globalAppRule} is the same with that of Algorithm~\ref{alg:localAppRule}, i.e., $\mathcal{O}(64 \times nz)$. We call Algorithm~\ref{alg:localAppRule} for at most $min(\lceil log_2{(nz + 1)}\rceil, 5) + 1 \leq 6$ times. As a result, the \textbf{time complexity} of Algorithm~\ref{alg:globalAppRule} is $\mathcal{O}(64 \times nz^3 \times 6)$. As shown in Fig.~\ref{fig:LeadingZerosCountDistribution}, for almost all datasets, $nz$ is smaller than 30. To this end, the computational cost of Algorithm~\ref{alg:globalAppRule} is acceptable, let alone there are a wide range of pruning strategies proposed by this paper.

\section{Optimization for Center Bits}\label{sec:centerbits}

In this section, we first point out the shortcomings of {\em Elf} for center bits encoding, and then propose to encode trailing zeros instead of center bits.

\subsection{Existing Solution for Encoding Center Bits}


To encode the center bits of an XORed value, all of Gorilla~\cite{pelkonen2015gorilla}, Chimp~\cite{liakos2022chimp} and {\em Elf}~\cite{li2023elf} need to write the number of center bits $center_t$, followed by the actual center bits. In particular, as shown in Fig.~\ref{fig:ElfEncodingStrategy}, if $center_t \leq 16$, {\em Elf} uses 4 bits for $center_t$; otherwise, {\em Elf} utilizes 6 bits. Although this strategy works for most time series data, there are still two main shortcomings. First, for the time series data with big decimal significand count, {\em Elf} can only erase a small number of last few bits, resulting in a large number of center bits, so it always requires 6 bits to encode $center_t$. Second, as it requires for an extra bit to distinguish between the two cases (i.e., ``$center_t \leq 16$'' and ``$center_t > 16$''), it in fact occupies as many as 5 or 7 bits to encode $center_t$.

\subsection{Encoding Trailing Zeros Instead of Center Bits}

Observing that the number of center bits $center_t$ of $xor'_t$ can be calculated by $center_t = 64 - lead_t - trail_t$, where $lead_t$ and $trail_t$ are the numbers of leading zeros and trailing zeros of $xor'_t$, respectively, this paper proposes to encode $trail_t$ (note that $lead_t$ is already stored) instead of $center_t$. Like $lead_t$, the distribution of $trail_t$ is extremely uneven, as shown in Fig.~\ref{fig:TrailingZerosCountDistribution}. To this end, we adopt the same technique for $lead_t$ introduced in Section~\ref{sec:leading} to encode $trail_t$.

\begin{figure}[t]
  \centering
  \includegraphics[width=3.5in]{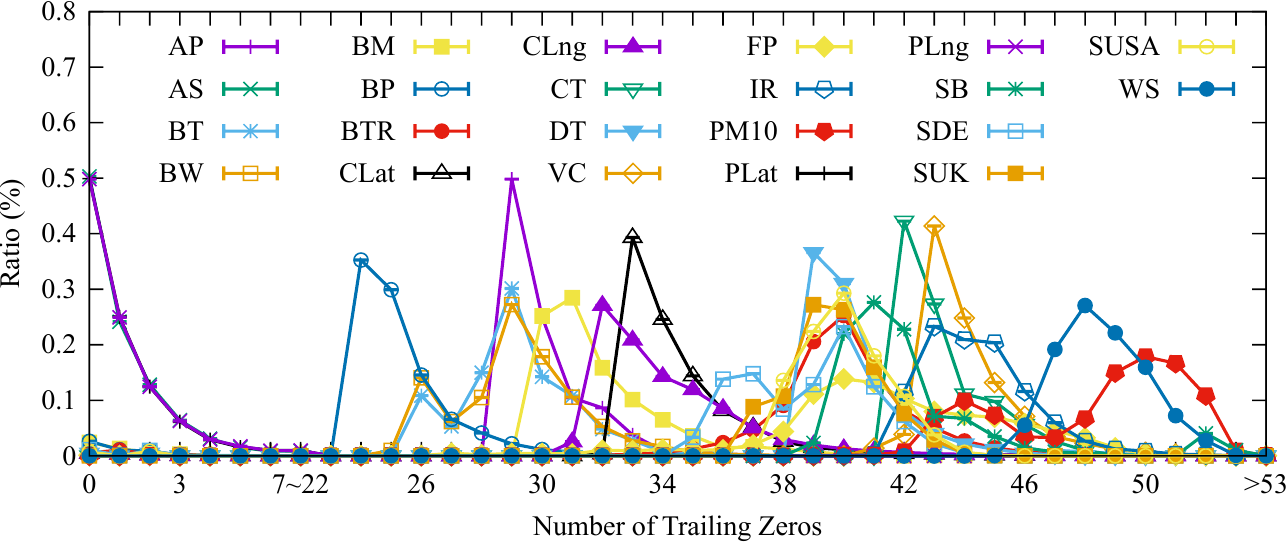}%
\vspace{-5pt}
  \caption{Distribution of Trailing Zeros Number.}
  \label{fig:TrailingZerosCountDistribution}
  \vspace{-10pt}
\end{figure}

Figure~\ref{fig:ElfStarEncoding}(a) exhibits the overall encoding strategies after performing the encoding optimizations for both leading zeros and center bits. Compared with Fig.~\ref{fig:ElfEncodingStrategy} with four cases, it has only three cases (i.e., we combine the two cases of ``$center_t \leq 16$'' and ``$center_t > 16$'' in Fig.~\ref{fig:ElfEncodingStrategy} into one). In particular, in the third case of Fig.~\ref{fig:ElfStarEncoding}(a) (i.e., ``$xor'_t \neq 0$ and not $C$''), we write $ln$ bits of $lead_t$ and $tn$ bits of $trail_t$, followed by the center bits with the length of $64 - lead_t - trail_t$~\footnote{In fact, we write $64 - App(lead_t) - App(trail_t)$ bits of center bits.}. Suppose the globally best rules for leading zeros and trailing zeros are $rule_{lead}$ and $rule_{trail}$, respectively, we let $ln = \lceil log_2{|rule_{lead}|}\rceil$ and $tn = \lceil log_2{|rule_{trail}|}\rceil$.

\begin{figure}[htb]
  \centering%
\vspace{-5pt}
  \includegraphics[width=3.5in]{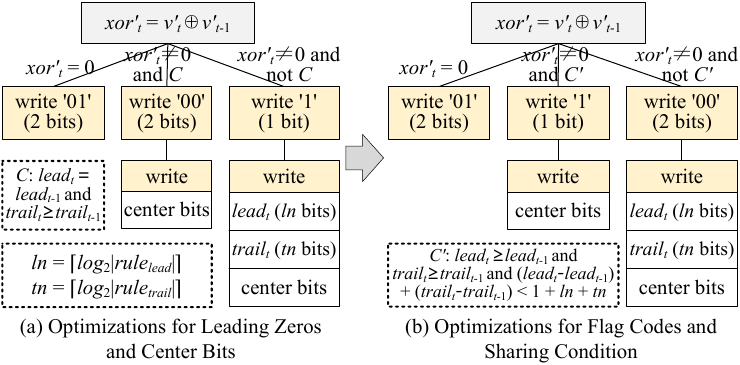}%
\vspace{-5pt}
  \caption{Evolution of Encoding Strategies.}%
  \label{fig:ElfStarEncoding}
  \vspace{-10pt}
\end{figure}

\section{Optimization for Sharing Condition}\label{sec:sharing}

In this section, we first point out the shortcomings if we simply apply the optimization for center bits to {\em Elf} encoding strategies, and then propose an adaptive sharing condition.

\subsection{Basic Solution}

Following the idea of {\em Elf}, if $xor'_t \neq 0$, there are two cases, as shown in Fig.~\ref{fig:ElfStarEncoding}(a). If the condition $C$ (i.e., ``$lead_t = lead_{t - 1}$ and $trail_t \geq trail_{t-1}$'') does not hold, we first write one flag bit of `1', then write $ln$ bits of $lead_t$ and $tn$ bits of $trail_t$, and finally write the center bits. If the condition $C$ holds, we first write two flag bits of `00', and then write the center bits directly. The rationale behind is to share the information of leading zeros and trailing zeros with the previous XORed value $xor'_{t-1}$, which is expected to save some bits.

However, it could be problematic for the following two reasons. First, it utilizes three flag codes with different lengths to distinguish the three cases. The selection of flag codes might be suboptimal, because it ignores the frequency of each case. Second, since we utilize the adaptive bits for $lead_t$ and $trail_t$, the condition $C$ adopted by {\em Elf} may be not suitable for our scenarios. The saved bits when sharing the information with the previous value may not offset the approximation cost. 

\subsection{Flag Codes Reassignment \& Sharing Condition Adaptation}

To ameliorate the problems above, this paper makes two corresponding optimizations, as shown in Fig.~\ref{fig:ElfStarEncoding}(b). 

First, we observe that for most time series, the erased XORed values that satisfy the condition $C$ are much more than those that do not, so it is not quite effective if we assign as many as two flag bits to the case of ``$xor'_t \neq 0$ and $C$''. Therefore, we switch the flag codes between case ``$xor'_t \neq 0$ and $C$'' and case ``$xor'_t \neq 0$ and not $C$''. 

Second, we devise an adaptive sharing condition technique. Specifically, we modify the condition $C$ to $C'$ (i.e., ``$lead_t \geq lead_{t-1}$ and $trail_t \geq trail_{t_1}$ and $(lead_t - lead_{t-1}) + (trail_t - trail_{t - 1}) < 1 + ln + tn$''). Here, the conditions of ``$lead_t \geq lead_{t-1}$ and $trail_t \geq trail_{t-1}$'' guarantee the lossless compression, and the condition ``$(lead_t - lead_{t-1}) + (trail_t - trail_{t - 1}) < 1 + ln + tn$'' ensures a positive gain after we share the previous information. ``$1 + ln + tn$'' is the extra bits (including the flag codes) we need to write when $C'$ dose not hold compared to the case when $C'$ holds.

\section{Streaming Scenarios Extension}\label{sec:stream}

In order to obtain the rules for leading zeros and trailing zeros, it requires to know in advance the distributions of leading zeros and trailing zeros, which is not practical in streaming scenarios. In this section, we extend {\em Elf}$^*$ to Streaming {\em Elf}$^*$, i.e., {\em SElf}$^*$.

\setlength{\tabcolsep}{0.33em} 
\begin{table}[b]
\vspace{-15pt}
\caption{Distribution Euclidean Distance of Different Time Windows}
\centering\label{tbl:streamObservation}
\begin{tabular}{|c|c||c|c|} 
\hline
\multicolumn{2}{|c||}{\textbf{Leading Zeros}}	&\multicolumn{2}{c|}{\textbf{Trailing Zeros}}\\
\hline
Same Dataset	& Different Datasets	& Same Dataset	& Different Datasets\\
\hline
0.15			& 0.46					& 0.22			& 0.66				\\
\hline			
\end{tabular}
\end{table}

\textbf{Observation.} To investigate the distribution changes of leading zeros and trailing zeros, we split each dataset into multiple time windows, where each window contains 1,000 consecutive values. In each window, we calculate a leading zeros distribution and a trailing zeros distribution. We compute the Euclidean distances of the corresponding distributions between windows pairwise. Table~\ref{tbl:streamObservation} reports the average Euclidean distances between windows in the same dataset or different datasets, respectively, from which we can conclude that, for both leading zeros and trailing zeros, their distributions are much more similar in the same dataset than in different datasets. It motivates us to calculate the rules using the previous values in the same dataset in streaming scenarios.

\textbf{Streaming {\em Elf}$^*$.} In streaming scenarios, we employ a sliding window technique. As shown in Fig.~\ref{fig:SElfStar}(a), during compression, in the $i$-th window, we first write the rules $rule_{lead}^i$ and $rule_{trail}^i$, and then streamingly encode up to $w$ erased XORed values, where $w$ is the window size. During decompression, we first read $rule_{lead}^i$ and $rule_{trail}^i$, with which we can decompress the values in this window successfully.

\begin{figure}[t]
  \centering
  \includegraphics[width=3.5in]{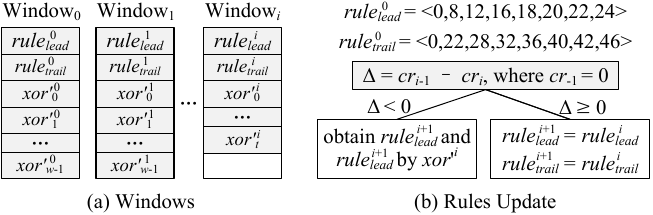}%
\vspace{-5pt}
  \caption{Streaming {\em Elf}$^*$.}
  \label{fig:SElfStar}
\vspace{-10pt}
\end{figure}

The calculation of the rules in a window without knowing the distributions in this window is presented in Fig.~\ref{fig:SElfStar}(b). We set fixed $rule_{lead}^0$ and $rule_{trail}^0$ manually. However, they may not be effective for different datasets or for all windows in the same time series. To compute $rule_{lead}^{i+1}$ and $rule_{trail}^{i+1}$ in the upcoming $(i$+1$)$-th window, this paper devises a simple but effective solution. Particularly, we compute the compression ratios (i.e., $cr_{i-1}$ and $cr_{i}$) of the $(i$-1$)$-th and $i$-th windows, respectively, and then compute their difference $\Delta = cr_{i-1} - cr_i$. If the performance deteriorates in terms of compression ratio (i.e., $\Delta < 0$), we compute $rule_{lead}^{i+1}$ and $rule_{trail}^{i+1}$ using the values in the $i$-th window; otherwise, we share the rules with the $i$-th window directly. This solution not only ensures high compression ratio, but also reduces the computational cost. The effect of window size selection on compression performance is studied in Section~\ref{sec:exp}.

\begin{table*}[t]
\caption{Details of Datasets}
\centering\label{tbl:datasets}
\resizebox{6.8in}{!}{
\begin{tabular}{|c|c|c|c|c|l|} 
\hline
\multicolumn{2}{|c|}{\textbf{Dataset}}	&\textbf{\#Records}	&\textbf{$\beta$}	&\textbf{Time Span}  &\textbf{Description}\\
\hline
\hline
\multirow{14}{*}{\rotatebox[origin=c]{90}{\textbf{Time Series}}}		
&Air-pressure (AP)~\cite{AirPressure}	    &137,721,453		&7					&6 years 	& Barometric pressure corrected by  sea level and surface level\\
&Air-sensor (AS)~\cite{influxdb2data}		&8,664				&17					&1 hour 	& A synthetic dataset that includes air sensor data with random noise\\
&Bird-migration (BM)~\cite{influxdb2data}	&17,964				&7					&1 year 	& The changes in the location of bird migration and vegetation cover\\
&Bitcoin-price (BP)~\cite{influxdb2data}	&2,741				&9					&1 month 	& Bitcoin's price fluctuation in dollar exchange rate\\
&Basel-temp (BT)~\cite{Basel}				&124,079			&9					&14 years  	& The air temperature in Basel, Switzerland\\
&Basel-wind (BW)~\cite{Basel}				&124,079			&8					&14 years 	& The wind speed in Basel, Switzerland\\
&City-temp (CT)~\cite{CityTemp}				&2,905,887			&3					&25 years  	& The temperature in numerous cities all over the world\\
&Dewpoint-temp (DT)~\cite{DewpointTemp}		&5,413,914			&4					&3 years 	& The relative dew point measured by sensors floating on rivers and lakes\\
&IR-bio-temp (IR)~\cite{IRBioTemp}			&380,817,839		&3					&7 years 	& Infrared biological temperature\\
&PM10-dust (PM10)~\cite{PM10Dust}			&222,911			&3					&5 years 	& Concentration of PM10 in the atmosphere\\
&Stocks-DE (SDE)~\cite{Stocks}				&45,403,710			&6					&1 year 	& The price fluctuation of stock exchanges in Germany\\
&Stocks-UK (SUK)~\cite{Stocks}				&115,146,731		&5					&1 year	 	& The price fluctuation of stock exchanges in UK\\
&Stocks-USA (SUSA)~\cite{Stocks}			&374,428,996		&4					&1 year 	& The price fluctuation of stock exchanges in USA\\
&Wind-speed (WS)~\cite{WindDir}				&199,570,396		&2					&6 years 	& The speed of horizontal and vertical winds\\

\hline
\hline
\multirow{8}{*}{\rotatebox[origin=c]{90}{\textbf{Non-Time Series}}}
&Blockchain-tr (BTR)~\cite{BlockchianTr}	&231,031			&5					&- 			& Bitcoin daily transaction volume\\
&City-lat (CLat)~\cite{citylat}				&41,001				&6					&- 			& The latitudes of the cities and towns  around the world\\
&City-lon (CLon)~\cite{citylat}				&41,001				&7					&- 			& The longitudes of the cities and towns around the world\\
&Food-price (FP)~\cite{WorldFoodPrice}		&2,050,638			&3					&- 			& The changes of global food prices\\
&POI-lat (PLat)~\cite{POI}					&424,205			&16					&- 			& The latitudes in radian coordinates of POIs extracted from Wikipedia\\
&POI-lon (PLon)~\cite{POI}					&424,205			&16					&- 			& The longitudes in radian coordinates of POIs extracted from Wikipedia\\
&SD-bench (SB)~\cite{SSD}					&8,927				&4					&- 			& The performance of multiple storage drives scored by benchmarks\\
&Vehicle-charge (VC)~\cite{VehicleCharge}	&3,395				&3					&- 			& The total energy usage and charging time for a group of electric vehicles\\

\hline
\end{tabular}
}
\vspace{-10pt}
\end{table*}

\section{Experiments}\label{sec:exp}


\subsection{Experimental Settings}
\textbf{Datasets.} We use 22 datasets to evaluate the performance of {\em Elf}$^*$ and {\em SElf}$^*$. These datasets consist of 14 time series and 8 non-time series, all of which are consistent with those in {\em Elf} paper~\cite{li2023elf}. The details of the datasets are listed in Table~\ref{tbl:datasets}. In this paper, we regard $\beta \geq 16$ as large, and $\beta < 16$ as small.

\textbf{Baselines.} We evaluate the performance of {\em SElf}$^*$ in comparison to \textbf{five} streaming floating-point compression methods, i.e., Gorilla~\citep{pelkonen2015gorilla}, Chimp~\cite{liakos2022chimp}, Chimp$_{128}$~\citep{liakos2022chimp}, {\em Elf}~\cite{li2023elf} and {\em Elf+}~\cite{Elf+}. Here, {\em Elf}+ is an upgraded version of {\em Elf}, which optimizes the encoding strategies for the significand counts based on the fact that the values in a time series are expected to have the same significand count. All these optimizations proposed by {\em Elf}+ are also adopted by this paper.
Meanwhile, we compare {\em Elf}$^*$ with \textbf{three} widely-used batch compression methods, including Xz~\cite{Xz}, Zstd~\cite{collet2016zstd} and Snappy~\cite{snappy}. Besides, we conduct ablation experiments with \textbf{three} variants, i.e., {\em Elf$^*$-S} ({\em Elf}$^*$ without optimizations for sharing condition), {\em Elf$^*$-S-C} ({\em Elf}$^*$ without optimizations for sharing condition and center bits), and {\em Elf}+ (note that without any optimizations, {\em Elf}$^*$ is degraded into {\em Elf}+). All experiments are implemented in Java, and the source codes are publicly released~\cite{ElfStar}.

\textbf{Metrics.} We use three metrics, i.e., compression ratio, compression time and decompression time, to measure the performance. It is worth noting that the compression ratio here refers to the ratio of the compressed data size to the uncompressed data size.

\textbf{Settings.} As these works~\cite{liakos2022chimp, li2023elf, Elf+} did, by default, we regard 1,000 consecutive records in each dataset as a block. Each compression method is executed on up to 100 different blocks for each dataset, and the average measurements of one block are finally reported. The default window size for {\em SElf}$^*$ is 1,000. We run our experiments on a computer with Windows 11, 11th Gen Intel (R) Core (TM) i5-11400 @ 2.60GHz CPU and 16GB memory. The JDK (Java Development Kit) version is 1.8.

\subsection{Overall Comparison with Baselines}

\setlength{\tabcolsep}{0.1em} 
\begin{table*}[t]
\caption{Overall Comparison with Baselines (the best results are in \textbf{bold}, while the suboptimal results are in \underline{underlined}).} 
\centering\label{tbl:overallcomparison}
\resizebox{7.16in}{!}{
\begin{tabular}{|c|c|c||c|c|c|c|c|c|c|c|c|c|c|c|c|c||c||c|c|c|c|c|c|c|c||c|} 
\hline
\multicolumn{3}{|c||}{\multirow{2}{*}{\textbf{Dataset}}} & \multicolumn{15}{c||}{\textbf{Time Series}} & \multicolumn{9}{c|}{\textbf{Non-Time Series}} \\
\cline{4-27}


\multicolumn{3}{|c||}{}& AP	& AS & BM & BP & BT	& BW	& CT	 & DT & IR & PM10 & SDE & SUK & SUSA	 & WS & Avg. & BTR & CLat & CLon	&  FP & PLat & PLon &SB & VC	 &  Avg.\\
\hline
\hline

\multirow{10}{*}{\rotatebox[origin=c]{90}{\textbf{Compression Ratio}}}	&  	\multirow{6}{*}{\rotatebox[origin=c]{90}{\textbf{Streaming}}}& Gorilla&0.719 &	0.822 &	0.786 &	0.835 &	0.940 &	0.994 &	0.847 &	0.833 &	0.696 &	0.478 &	0.717 &	0.577 &	0.678 &	0.828 &	0.768 &	0.737 &	1.032 &	1.032 &	0.576 &	1.030 &	1.032 &	0.629 &	0.996 &	0.883 \\
& & Chimp&0.653 &	\textbf{0.774} &	0.716 &	0.766 &	0.846 &	0.876 &	0.641 &	0.775 &	0.640 &	0.427 &	0.670 &	0.519 &	0.640 &	0.817 &	0.697 &	0.668 &	0.923 &	0.984 &	0.471 &	\underline{0.901} &	\underline{0.989} &	0.548 &	0.862 &	0.793 \\
& & Chimp$_{128}$&0.544 &	\textbf{0.774} &	0.496 &	0.722 &	\textbf{0.472} &	0.712 &	0.318 &	0.347 &	0.248 &	0.228 &	0.271 &	0.286 &	0.230 &	0.232 &	0.420 &	0.552 &	0.777 &	0.849 &	0.340 &	\textbf{0.898} &	\textbf{0.986} &	0.266 &	0.362 &	0.629 \\
& & Elf&0.306 &	0.848 &	0.423 &	0.565 &	0.579 &	0.587 &	0.255 &	0.311 &	0.220 &	0.173 &	0.262 &	0.219 &	0.240 &	0.256 &	0.375 &	0.359 &	0.559 &	0.628 &	0.232 &	0.962 &	1.059 &	0.267 &	0.342 &	0.551 \\
& & Elf+&\underline{0.255} &	0.863 &	\underline{0.378} &	\underline{0.497} &	0.524 &	\underline{0.556} &	\underline{0.217} &	\underline{0.256} &	\underline{0.163} &	\underline{0.122} &	\underline{0.229} &	\underline{0.193} &	\underline{0.184} &	\underline{0.204} &	\underline{0.331} &	\underline{0.300} &	\underline{0.506} &	\underline{0.600} &	\underline{0.220} &	0.977 &	1.074 &	\underline{0.235} &	\underline{0.290} &	\underline{0.525} \\
& & SElf*& \textbf{0.234} &	0.797 &	\textbf{0.343} &	\textbf{0.443} &	\textbf{0.472} &	\textbf{0.504} &	\textbf{0.170} &	\textbf{0.213} &	\textbf{0.162} &	\textbf{0.121} &	\textbf{0.197} &	\textbf{0.170} &	\textbf{0.159} &	\textbf{0.169} &	\textbf{0.297} &	\textbf{0.272} &	\textbf{0.453} &	\textbf{0.546} &	\textbf{0.204} &	0.915 &	1.001 &	\textbf{0.221} &	\textbf{0.260} &	\textbf{0.484} \\
\cline{2-27}

& \multirow{4}{*}{\rotatebox[origin=c]{90}{\textbf{Batch}}} & Elf*&\textbf{0.232} &	\textbf{0.786} &	\textbf{0.336} &	\textbf{0.431} &	0.466 &	\textbf{0.502} &	\textbf{0.169} &	\textbf{0.208} &	\textbf{0.130} &	\textbf{0.106} &	\textbf{0.191} &	\textbf{0.153} &	\textbf{0.144} &	\underline{0.163} &	\textbf{0.287} &	\textbf{0.267} &	\textbf{0.451} &	\textbf{0.544} &	\textbf{0.199} &	\textbf{0.914} &	1.000 &	0.216 &	\underline{0.247} &	\textbf{0.480} \\
& & Xz&\underline{0.461} &	\textbf{0.786} &	\underline{0.428} &	\underline{0.627} &	\textbf{0.347} &	\underline{0.573} &	\underline{0.179} &	\underline{0.272} &	\underline{0.167} &	\underline{0.114} &	\underline{0.192} &	\underline{0.161} &	\underline{0.171} &	\textbf{0.149} &	\underline{0.331} &	\underline{0.400} &	\underline{0.602} &	\underline{0.632} &	\underline{0.231} &	\underline{0.928} &	\textbf{0.959} &	\textbf{0.127} &	\textbf{0.231} &	\underline{0.514} \\
& & Zstd&0.575 &	0.914 &	0.506 &	0.753 &	\underline{0.407} &	0.607 &	0.221 &	0.380 &	0.254 &	0.151 &	0.265 &	0.221 &	0.240 &	0.192 &	0.406 &	0.452 &	0.676 &	0.708 &	0.301 &	0.940 &	\underline{0.961} &	\underline{0.166} &	0.339 &	0568 \\
& & Snappy&0.720 &	1.002 &	0.610 &	0.987 &	0.539 &	0.753 &	0.291 &	0.505 &	0.315 &	0.216 &	0.345 &	0.316 &	0.323 &	0.275 &	0.514 &	0.541 &	0.834 &	0.873 &	0.393 &	1.002 &	1.002 &	0.245 &	0.420 &	0.664 \\
\hline
\hline

\multirow{10}{*}{\rotatebox[origin=c]{90}{\textbf{Compression Time ($\mu$s)}}}	&  	\multirow{6}{*}{\rotatebox[origin=c]{90}{\textbf{Streaming}}}& Gorilla&\textbf{21} &	\textbf{40} &	\textbf{20} &	\textbf{21} &	\textbf{31} &	\textbf{24} &	\textbf{21} &	\textbf{20} &	\textbf{22} &	\textbf{18} &	\textbf{19} &	\textbf{19} &	\textbf{19} &	\textbf{20} &	\textbf{22} &	\textbf{20} &	\textbf{22} &	\textbf{25} &	\textbf{20} &	\textbf{23} &	\textbf{25} &	\textbf{20} &	\textbf{25} &	\textbf{23} \\
& & Chimp&\textbf{21} &	\underline{41} &	\underline{27} &	\underline{30} & \textbf{31} &	\underline{29} &	\underline{28} &	\underline{27} &	\underline{24} &	\underline{21} &	\underline{25} &	\underline{27} &	\underline{25} &	\underline{26} &	\underline{27} &	\underline{25} &	\underline{28} &	\underline{30} &	\underline{25} &	\underline{28} &	\underline{34} &	\underline{26} &	\underline{34} &	\underline{29} \\
& & Chimp$_{128}$&31 &	52 &	33 &	40 &	38 &	41 &	30 &	32 &	26 &	24 &	27 &	30 &	26 &	\underline{26} &	33 &	32 &	40 &	45 &	31 &	35 &	40 &	31 &	38 &	37 \\
& & Elf&48 &	271 &	65 &	63 &	120 &	90 &	50 &	55 &	51 &	47 &	58 &	54 &	57 &	62 &	78 &	90 &	64 &	74 &	53 &	79 &	99 &	55 &	62 &	72 \\
& & Elf+&70 &	209 &	51 &	48 &	114 &	102 &	36 &	42 &	35 &	30 &	43 &	46 &	39 &	40 &	65 &	93 &	56 &	68 &	42 &	91 &	75 &	37 &	48 &	64 \\
& & SElf*&67 &	206 &	67 &	65 &	101 &	90 &	41 &	45 &	46 &	40 &	51 &	50 &	46 &	50 &	69 &	74 &	58 &	76 &	61 &	68 &	75 &	50 &	51 &	64 \\
\cline{2-27}

& \multirow{4}{*}{\rotatebox[origin=c]{90}{\textbf{Batch}}} & Elf*&\textbf{81} &	\underline{162} &	\textbf{93} &	\textbf{82} &	\textbf{97} &	\textbf{90} &	\textbf{60} &	\textbf{63} &	\textbf{60} &	\textbf{56} &	\textbf{69} &	\textbf{69} &	\textbf{61} &	\textbf{63} &	\textbf{79} &	\textbf{81} &	\textbf{76} &	\textbf{93} &	\textbf{89} &	\textbf{69} &	\underline{84} &\textbf{	73} &	\textbf{72} &	\textbf{80} \\
& & Xz&1974 &	2411 &	1820 &	1923 &	1565 &	2036 &	1305 &	1542 &	1294 &	1107 &	1277 &	1190 &	1238 &	1247 &	1566 &	1562 &	1685 &	1795 &	1446 &	1938 &	1999 &	1250 &	1779 &	1682 \\
& & Zstd&270 &	211 &	271 &	281 &	262 &	274 &	213 &	249 &	169 &	208 &	188 &	160 &	161 &	108 &	216 &	257 &	250 &	232 &	200 &	215 &	194 &	156 &	199 &	213 \\
& & Snappy&\underline{123} &	\textbf{114} &	\underline{114} &	\underline{113} &	\underline{113} &	\underline{125} &	\underline{93} &	\underline{102} &	\underline{87} &	\underline{93} &	\underline{104} &	\underline{82} &	\underline{82} &	\underline{82} &	\underline{102} &	\underline{96} &	\underline{98} &	\underline{94} &	\textbf{89} &	\underline{76} &	\textbf{80} &	\underline{83} &	\underline{114} &	\underline{91} \\
\hline
\hline

\multirow{10}{*}{\rotatebox[origin=c]{90}{\textbf{Decompression Time ($\mu$s)}}}	&  	\multirow{6}{*}{\rotatebox[origin=c]{90}{\textbf{Streaming}}} & Gorilla&\textbf{20} &	\textbf{28} &	\textbf{20} &	\textbf{21} &	\underline{39} &	\textbf{22} &	\underline{20} &	\textbf{20} &	\underline{21} &	\underline{18} &	\underline{20} &	\textbf{18} &	\underline{20} &	\underline{20} &	\textbf{22} &	\textbf{19} &	\textbf{20} &	\textbf{23} &	\textbf{19} &	\textbf{21} &	\textbf{20} &	\textbf{20} &	\underline{25} &\textbf{21} \\
& & Chimp&\underline{22} &	\underline{34} &	26 &	\underline{27} &	43 &	\underline{25} &	24 &	25 &	23 &	19 &	24 &	22 &	23 &	24 &	26 &	\underline{23} &	\underline{25} &	\underline{27} &	22 &	\underline{23} &	\underline{26} &	24 &	32 &	\underline{25} \\
& & Chimp$_{128}$&30 &	39 &	\underline{24} &	29 &	\textbf{34} &	29 &	\textbf{19} &	\textbf{20} &	\textbf{18} &	\textbf{17} &	\textbf{19} &	\underline{20} &	\textbf{18} &	\textbf{18} &	\underline{24} &	25 &	29 &	32 &	\underline{20} &	26 &	28 &	\textbf{20} &	\textbf{24} &	\underline{25} \\
& & Elf&49 &	98 &	47 &	51 &	88 &	55 &	41 &	45 &	43 &	41 &	43 &	35 &	44 &	47 &	52 &	63 &	50 &	57 &	37 &	29 &	32 &	39 &	47 &	44 \\
& & Elf+&34 &	120 &	38 &	37 &	84 &	45 &	36 &	34 &	32 &	29 &	33 &	30 &	32 &	34 &	44 &	64 &	45 &	55 &	33 &	38 &	37 &	33 &	44 &	44 \\
& & SElf*&27 &	46 &	38 &	37 &	49 &	46 &	27 &	27 &	30 &	27 &	33 &	28 &	29 &	31 &	34 &	34 &	39 &	52 &	31 &	33 &	34 &	30 &	37 &	36 \\
\cline{2-27}

& \multirow{3}{*}{\rotatebox[origin=c]{90}{\textbf{Batch}}}& Elf*&\textbf{28} &	\underline{54} &	\textbf{40} &	\textbf{37} &	\textbf{43} &	\underline{45} &	\textbf{28} &	\textbf{28} &	\underline{29} &	\underline{27} &	33 &	\underline{28} &	\underline{29} &	31 &	\textbf{34} &	\textbf{35} &	\textbf{39} &	52 &	\underline{31} &	\underline{33} &	34 &	\underline{31} &	\textbf{35} &	\underline{36} \\
& & Xz&398 &	665 &	349 &	511 &	291 &	456 &	156 &	215 &	135 &	97 &	152 &	139 &	137 &	120 &	273 &	315 &	453 &	469 &	193 &	666 &	687 &	111 &	195 &	386 \\
& & Zstd&59 &	58 &	60 &	60 &	52 &	57 &	48 &	\underline{38} &	30 &	29 &	\underline{30} &	31 &	30 &	\underline{29} &	44 &	59 &	49 &	\underline{50} &	39 &	\underline{33} &	\underline{29} &	\underline{31} &	48 &	42 \\
& & Snappy&\underline{47} &	\textbf{38} &	\underline{48} &	\underline{41} &	\underline{47} &	\textbf{42} &	\underline{38} &	41 &	\textbf{25} &	\textbf{24} &	\textbf{25} &	\textbf{25} &	\textbf{25} &	\textbf{23} &	\underline{35} &	\underline{43} &	\underline{43} &	\textbf{44} &	\textbf{30} &	\textbf{19} &	\textbf{19} &	\textbf{23} &	\underline{39} &	\textbf{33} \\
\hline

\end{tabular}
}
\vspace{-10pt}
\end{table*}

Table~\ref{tbl:overallcomparison} presents the results of various compression algorithms on different datasets. We group the algorithms into streaming and batch ones. Moreover, we divide the datasets into time series and non-time series. We investigate the performance of different algorithms in each group separately.

\subsubsection{Compression Ratio} According to Table~\ref{tbl:overallcomparison}, we have the following observations in terms of compression ratio.

\textbf{{\em SElf}$^*$ VS Other Streaming Algorithms}. Among the streaming compression algorithms, {\em SElf}$^*$ usually performs the best in terms of compression ratio. For example, for time series datasets, compared with the best streaming competitor {\em Elf}+, {\em SElf}$^*$ achieves an average improvement of 10.27\% (i.e., $(0.331-0.297)/0.331 \approx 10.27\%$) with regard to compression ratio. Similarly, for non-time series datasets, {\em SElf}$^*$ has an average compression ratio improvement of 7.8\% over {\em Elf}+. All of these verify the effectiveness of the optimizations proposed by this paper. We also notice that {\em SElf}$^*$ enjoys about 61.3\% and 29.3\% relative improvement over Gorilla and Chimp$_{128}$ for time series datasets, respectively, which proves the powerful compression capability of {\em SElf}$^*$. For the datasets with large significand count $\beta$ (i.e., AS, PLat and PLng), {\em SElf}$^*$ performs slightly worse than Chimp$_{128}$. This is because {\em SElf}$^*$ is based on {\em Elf}. {\em Elf} would erase fewer bits for the datasets with larger $\beta$, resulting in an unsatisfactory compression ratio. However, this gap is narrowed by {\em SElf}$^*$. {\em SElf}$^*$ has the same best compression ratio as Chimp$_{128}$ on BT, although the significand count of BT is only 9. We find that there are a large number of equal values in BT, so Chimp$_{128}$ can always find the same value from the previous 128 records.

\textbf{{\em Elf}$^*$ VS Other Batch Algorithms}. {\em Elf}$^*$ shows the best compression ratio among the batch compression algorithms for almost all datasets. Specifically, compared with the best competitor Xz, {\em Elf}$^*$ achieves relative average improvement of 13.3\% and 6.6\% for time series and non-time series, respectively. For the datasets of BT and SB, Xz outperforms {\em Elf}$^*$ significantly. It could be the reason that the records in BT and SB contain many same substrings (although their values are not exactly the same, e.g., 2.57\underline{05285} vs 2.61\underline{05285}). Thus, it is of great benefit for Xz to build a dictionary and reach a better compression ratio. {\em Elf}$^*$ outperforms Zstd and Snappy by average relative improvement of 29.3\% and 44.2\% for time series datasets, respectively. For non-time series datasets, the improvement over Zstd and Snappy turns out to be 15.5\% and 27.7\%, respectively. Compared with time series, the improvement of {\em Elf}$^*$ over other batch algorithms for non-time series is not so significant, which indicates the characteristics of time series can enhance {\em Elf}$^*$ drastically.

\textbf{{\em SElf}$^*$ VS {\em Elf}$^*$}. For all datasets, the compression ratio of {\em SElf}$^*$ is slightly worse than that of {\em Elf}$^*$, but the difference can be negligible. On one hand, the distributions of leading zeros and trailing zeros in the same dataset do not fluctuate significantly. On the other hand, we apply a rigorous criterion for rules update. Once the compression ratio shows a trend of deterioration, we update the rules based on the latest values, which guarantees a good compression performance.

\subsubsection{Compression Time and Decompression Time}
With regard to compression time and decompression time, we have the following findings.

\textbf{{\em SElf}$^*$ VS Other Streaming Algorithms}. Compared with Gorilla, Chimp and Chimp$_{128}$, the algorithms in {\em Elf}'s family (i.e., {\em Elf}, {\em Elf+}, {\em Elf}$^*$ and {\em SElf}$^*$) spend a bit more compression time and decompression time due to the erasing steps and restoring steps, which inevitably take some time. Among the algorithms in {\em Elf}'s family, {\em Elf+} has the shortest compression time, because it utilizes the significand count of the previous value and employs an efficient numerical checking method. In despite of the extra computational costs for approximation rules, {\em SElf}$^*$ manifests a similar compression time as {\em Elf}+. There could be two reasons. First, we propose a set of pruning strategies during the computation for rules and employs an effective rule update strategy, which reduces the compression time considerably. Second, {\em SElf}$^*$ has a better compression ratio, indicating fewer compressed bits to write. The decompression logic of {\em SElf}$^*$ is the same with that of {\em Elf}+, but {\em SElf}$^*$ reads fewer bits, resulting in a shorter decompression time.

\textbf{{\em Elf}$^*$ VS Other Batch Algorithms}. Among batch compression algorithms, {\em Elf}$^*$ has the shortest compression time for almost all datasets. For example, the average compression time of Xz is 20 times that of {\em Elf}$^*$ for both time series and non-time series.  Besides, {\em Elf}$^*$ takes only one-ninth of the decompression time compared to Xz. Compared with compression time, the difference of decompression time is not so significant. To this end, most compression algorithms focus more on the trade-off between compression ratio and compression time.

\textbf{{\em SElf}$^*$ VS {\em Elf}$^*$}. Compared with {\em Elf}$^*$, {\em SElf}$^*$ enjoys less compression time, because it can share the rules with the previous time window, thus avoiding computing them each time.  {\em SElf}$^*$ and {\em Elf}$^*$ have almost the same decompression time, since their decompression logics are exactly the same.
 
\subsubsection{Summary}
When considering the compression ratio, compression time and decompression time comprehensively, {\em SElf}$^*$ and {\em Elf}$^*$ can usually achieve remarkable performance.

\subsection{Ablation Experiments}

\begin{figure}[t]
	\centering
	\includegraphics[width=3.4in]{./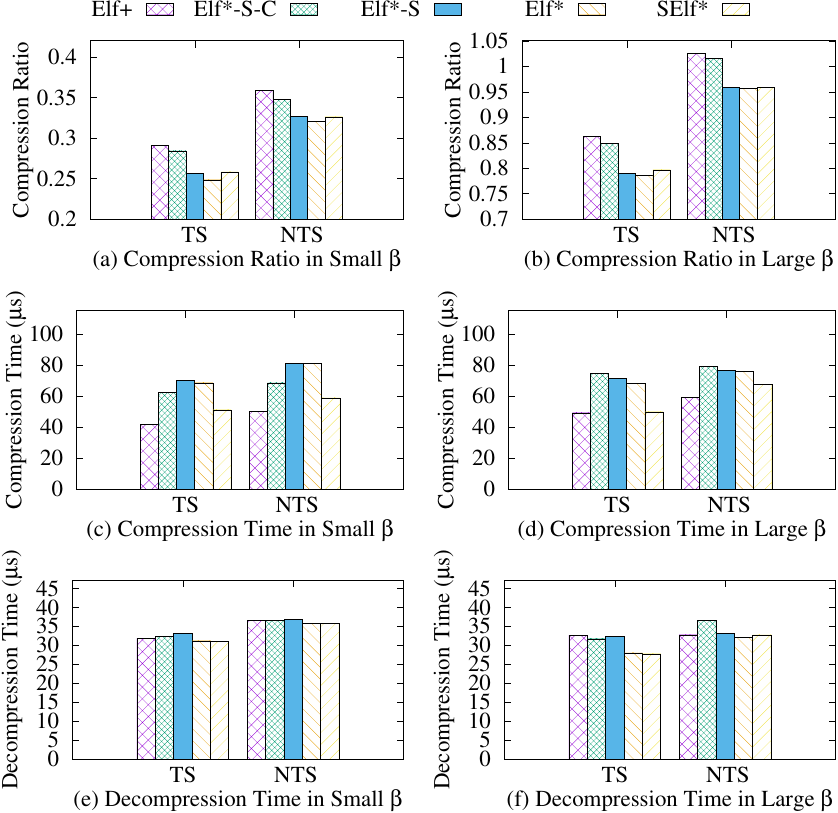}%
	\caption{Performance in Ablation Experiments} %
	\label{fig:exp:ablation}%
\vspace{-10pt}
\end{figure}%

To verify the effectiveness of the optimizations for leading zeros, center bits and sharing condition, we conduct a set of ablation experiments. 

\subsubsection{Compression Ratio}

It is observed from Figures~\ref{fig:exp:ablation}(a-b) that, for both time series (i.e., TS) and non-time series (i.e., NTS) with small or big $\beta$, {\em Elf$^*$-S-C} has a better compression ratio than {\em Elf+} , indicating the effectiveness of the optimizations for leading zeros. Similarly, {\em Elf$^*$-S} manifests a better compression ratio over {\em Elf$^*$-S-C}, which verifies the effectiveness of center bits optimizations. Thanks to the optimizations for sharing condition, {\em Elf}$^*$ further exhibits a better compression ratio compared to {\em Elf$^*$-S}. {\em SElf}$^*$ shows a slightly larger compression ratio than {\em Elf}$^*$, since it computes the approximation rules using the previous values, which may introduce some errors. It is interesting to observe that the compression ratio improvement of center bits optimizations (i.e., {\em Elf$^*$-S} over {\em Elf$^*$-S-C}) is the most significant compared to other optimizations. This is because {\em Elf$^*$-S-C} utilizes as many as 5 or 7 bits to encode the numbers of center bits, while {\em Elf$^*$-S} leverages much fewer bits (usually less than 3 bits) to encode the trailing zeros with adaptive approximation rules. 

Besides, for all algorithms in {\em Elf}'s family, the compression ratio on time series datasets is much better than that on non-time series datasets, since these algorithms successfully capture the characteristic that two consecutive values are expected to be similar, which are supposed to produce XORed values with many leading zeros. Moreover, datasets with large $\beta$ usually have larger compression ratio than those with small $\beta$, due to fewer bits can be erased. 

\subsubsection{Compression Time}
As shown in Figures~\ref{fig:exp:ablation}(c-d), {\em Elf$^*$-S-C} has a longer compression time than {\em Elf}+, because it needs to compute the leading zeros distributions and approximation rules. Although {\em Elf$^*$-S} takes more compression time than {\em Elf$^*$-S-C} when $\beta$ is small, the gap between {\em Elf$^*$-S} over {\em Elf$^*$-S-C} is narrower than that between {\em Elf$^*$-S-C} over {\em Elf}+. There could be two reasons. First, we share the calculation process of leading zeros distribution and trailing zeros distribution. Second, since {\em Elf$^*$-S} can achieve a much better compression ratio than {\em Elf$^*$-S-C}, it takes much less time for {\em Elf$^*$-S} to write fewer compressed bits. {\em Elf}$^*$ takes less compression time than {\em Elf$^*$-S}, because the sharing condition optimization does not bring in any extra computational cost but can improve the compression ratio, leading to fewer bits to write. {\em SElf}$^*$ takes much less time than {\em Elf}$^*$ because it avoids some computations for approximation rules.

\subsubsection{Decompression Time}
Figures~\ref{fig:exp:ablation}(e-f) show that all algorithms in {\em Elf}'s family exhibit similar decompression time. Nonetheless, {\em Elf}$^*$ and {\em SElf}$^*$ spend slightly less decompression time than other algorithms. Because both two algorithms have better compression ratio, resulting in fewer bits to read.

\subsection{Performance with Different Block Sizes}

We conduct a set of experiments to investigate the effect of different block sizes on batch compression algorithms (note that the concept of block is actually not applicable to streaming scenarios). As shown in Fig.~\ref{fig:exp:block}(a), with an increasing block size, the compression ratio of {\em Elf}$^*$ keeps stable, while that of Xz, Zstd and Snappy first drops sharply and then decreases slowly when the block size is greater than 1,000. As shown in Fig.~\ref{fig:exp:block}(b), the compression efficiency of all algorithms first decreases rapidly and then keeps stable. When a block contains more data, Xz, Zstd and Snappy can build better dictionaries with less average time and fewer average bits per record.

\begin{figure}[t]
	\centering
	\includegraphics[width=3.5in]{./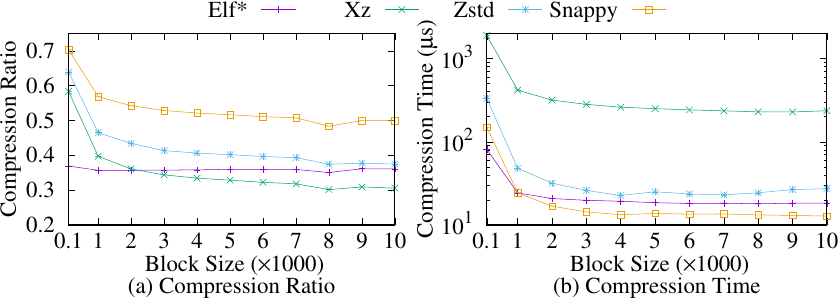}%
	\caption{Average Performance on All Datasets VS Different Block Sizes (the compression time is normalized to the time when compressing 1,000 records)} %
	\label{fig:exp:block}%
\end{figure}%

{\em Elf}$^*$ always performs better than Zstd in terms both of compression ratio and compression time. If the block size is larger than 2,000, the compression ratio of Xz is better than that of {\em Elf}$^*$, but it takes up to 2 orders of magnitude more compression time compared to {\em Elf}$^*$. For Snappy, although it is the fastest algorithm when the block size is larger than 1,000, its compression ratio is much worse than that of {\em Elf}$^*$.

Overall, when the block size is small, {\em Elf}$^*$ usually performs the best among all compared algorithms. When the block size is large, {\em Elf}$^*$ is still one of the most competitive algorithms.

\subsection{Performance with Different Window Sizes}

\begin{figure}[t]
\centering
\includegraphics[width=3.5in]{./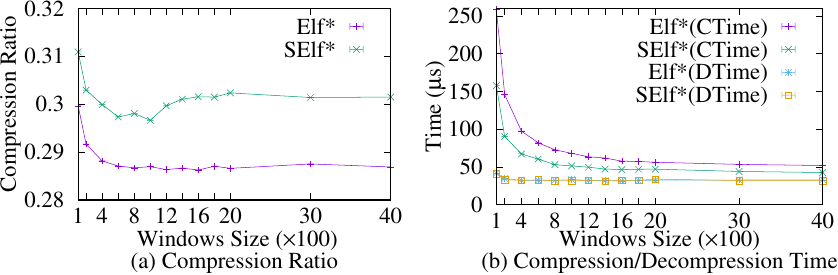}%
\caption{Average Performance on Time Series VS Different Window Sizes (the time is normalized to the time when handling 1,000 records).} %
\label{fig:exp:windowsize}%
\vspace{-10pt}
\end{figure}%

To explore the effect of window size selection on the performance of {\em SElf}$^*$, we conduct a set of experiments using the time series datasets (non-time series data are usually not generated in a streaming fashion). {\em Elf}$^*$ is selected as the baseline, since its compression performance provides a bound for {\em SElf}$^*$. {\em Elf}$^*$ regards the data in a time window as a block, and updates the rules every time. 

As shown in Fig.~\ref{fig:exp:windowsize}(a), when the window size gets larger, the compression ratio of {\em Elf}$^*$ first decreases and then keeps stable, while that of {\em SElf}$^*$ first decreases, then increases slightly, and finally stays stable. For {\em Elf}$^*$, more data provides more accurate distributions of leading zeros and trailing zeros. The distributions of 800 consecutive samples are close to the ground truths, so the compression ratio of {\em Elf}$^*$ keeps stable when the window size is larger than 800. For {\em SElf}$^*$, when the window size is too small (e.g., smaller than 600), it cannot capture the real distributions accurately. When the window size is too large (e.g., larger than 1,000), the distribution difference of two consecutive windows is large too, leading to a poor compression performance if we share the rules.

As shown in Fig.~\ref{fig:exp:windowsize}(b), with a larger window size, the decompression time of both {\em Elf}$^*$ and {\em SElf}$^*$ keeps stable (their decompression time is almost the same). However, their compression time first drops steeply and then keeps stable. When the window size is smaller,  there are more windows for a given dataset, which triggers more times of rules calculation.

In our datasets, the window size of 1,000 can ensure a good compression performance for {\em SElf}$^*$.

\section{Related Works}\label{sec:related}

This paper aims to compress time series data losslessly, so the lossy compression algorithms~\cite{lazaridis2003capturing, liang2022sz3, lindstrom2014fixed, liu2021decomposed, liu2021high, zhao2021optimizing, zhao2022mdz, liu2021exploring, liu2022dynamic, barbarioli2023hierarchical, li2023lossy, gong2022region, jiao2022toward, eichinger2015time} are outside the discussion scope of this paper. Here, we review the related works from three aspects.

\subsection{Algorithms XORing with Previous Values}
Considering that there is little variation between two consecutive values in a time series, this type of algorithms first performs an XORing operation on them, and then encodes the XORed result to achieve floating-point compression. Based on the assumption that there are both many leading zeros and trailing zeros in the XORed value, Gorilla~\cite{pelkonen2015gorilla} uses 5 bits and 6 bits to encode the numbers of leading zeros and center bits, respectively. Observing the severely unbalanced distribution of leading zeros, Chimp~\cite{liakos2022chimp} encodes the leading zeros with only 3 bits. Besides, inconsistent with Gorilla's assumption, Chimp finds that there are indeed extremely few trailing zeros in the XORed results, so its upgraded version Chimp$_{128}$~\cite{liakos2022chimp} selects from the previous 128 values the one that produces an XORed result with the most trailing zeros, which improves the compression ratio tremendously. However, to accelerate the computation, Chimp$_{128}$ maintains a hash table with a size of $32$KB, which might be not applicable in edge computing scenarios~\cite{mao2017survey, shi2016edge, paparrizos2021vergedb}.

To increase the number of trailing zeros, {\em Elf}~\cite{li2023elf} erases as many last few bits of the floating-point values as possible in a recoverable manner. Besides, it devises a set of elaborated encoding strategies for the erased XORed values. To this end, {\em Elf} achieves an impressive compression ratio with only $\mathcal{O}(1)$ space complexity. Taking into account that the significand counts of values in a time series are usually similar, {\em Elf+}~\cite{Elf+} optimizes the encoding strategy and computation of significand counts, which further enhances the compression ratio and reduces the compression time. However, like Chimp and Chimp$_{128}$, both {\em Elf} and {\em Elf}+ employ a fixed approximation rule with 3 bits for leading zeros. Moreover, they encode the number of center bits with as many as 5 or 7 bits. As a result, there is sill much room for improvement.

\subsection{Algorithms XORing with Predicted Values}
Instead of XORing with the previous value, this type of algorithms first predicts a value based on trained models~\cite{blalock2018sprintz, burtscher2007high, burtscher2008fpc, jensen2018modelardb, jensen2021scalable, ratanaworabhan2006fast, yu2020two}, and then XORs the current value with the predicted one. For example, DFCM~\cite{ratanaworabhan2006fast} and ZFP~\cite{ZFP} predict the value by one predictor, while FPC~\cite{burtscher2008fpc, burtscher2007high} uses two predictors to predict a value each, and then selects the better one. Different from DFCM~\cite{ratanaworabhan2006fast}, ZFP~\cite{ZFP} and FPC~\cite{burtscher2008fpc, burtscher2007high} using the values in the same time series for prediction, the works~\cite{jensen2018modelardb, jensen2021scalable, yu2020two} capture the characteristics of different time series using machine learning models. The model-based algorithms usually face the problem of low efficiency.

\subsection{General Compression Algorithms}
General compression algorithms~\cite{LZ77, LZMA, Xz, alakuijala2018brotli, collet2016zstd, snappy} can also be used to compress floating-point data. For example, LZ77~\cite{LZ77}, with a sliding window, utilizes a dictionary to handle repeated data, resulting in an effective compression. Building upon LZ77, LZMA~\cite{LZMA} introduces a dynamic dictionary whose size can be adjusted according to the characteristic of the input data. With LZMA as the core, Xz~\cite{Xz} adopts multi-threading to improve parallelism, thereby enhancing both efficiency of compression and decompression. Zstd~\cite{collet2016zstd} and Snappy~\cite{snappy} also leverage a trainable dictionary that is generated from a set of samples.
These general compression algorithms can usually reach high compression ratios, but they are relatively time-consuming and not suitable for streaming scenarios.


\section{Conclusion}\label{sec:conclude}
In this paper, we propose {\em Elf}$^*$, which adopts adaptive encoding strategies on the top of {\em Elf+}. Furthermore, we extend {\em Elf}$^*$ to {\em SElf}$^*$ for streaming scenarios. Comprehensive and extensive experiments on 22 datasets provide compelling evidence for the remarkable performance of {\em Elf}$^*$ and {\em SElf}$^*$. Among the streaming algorithms, {\em SElf}$^*$ outperforms {\em Elf+} with a compression ratio improvement of 9.2\% while keeping a similar efficiency, and enjoys improvement of 55.0\% and 26.4\% in compression ratio over Gorilla and Chimp$_{128}$, respectively. In batch processing scenarios, when the block is small, {\em Elf}$^*$ outperforms Xz by 10.1\% in terms of compression ratio but with only 4.8\% compression time. {\em Elf}$^*$ still ranks among the most competitive batch compressors when the block is big. As for future work, we plan to integrate {\em Elf}$^*$ and {\em SElf}$^*$ into hardware to further enhance its transmission capability. 

\section*{Acknowledgment}

This paper is supported by the National Natural Science Foundation of China (61976168, 61872050, 62172066) and China Postdoctoral Science Foundation (2022M720567). We would like to thank Lei Liu and Yuhang Feng for discussing the solutions with us.

\bibliographystyle{IEEEtran}

\footnotesize
\bibliography{IEEEabrv,reference}
\end{document}